\newtheorem{claim}{Claim}
\declaretheorem[name=Theorem,numberwithin=section]{mytheorem}
\declaretheorem[name=Corollary,sibling=mytheorem]{mycorollary}
\newclass{\Hard}{hard}
\newclass{\Hness}{hardness}
\newclass{\para}{para}
\newclass{\Complete}{complete}
\newclass{\Cness}{completeness}
\definecolor{blue-violet}{rgb}{0.54, 0.17, 0.89}
\definecolor{mangotango}{rgb}{1.0, 0.51, 0.26}
\newcommand{\problemtitle}[1]{\gdef\@problemtitle{#1}}
\newcommand{\probleminput}[1]{\gdef\@probleminput{#1}}
\newcommand{\problemquestion}[1]{\gdef\@problemquestion{#1}}
  \par\addvspace{.5\baselineskip}
  \par\addvspace{.5\baselineskip}
\author{M\'{a}rcia R. Cappelle
  \and Erika Coelho
  \and Les R. Foulds
  \and Humberto J. Longo}
\title[Open-independent, open-locating-dominating sets:  structural aspects of some classes of graphs]{Open-independent, open-locating-dominating sets:  structural aspects of some classes of graphs}
\affiliation{
  Instituto de Inform\'{a}tica, Universidade Federal de Goi\'{a}s, Goi\^{a}nia, Brazil
}
\keywords{open-independent, open-locating-dominating, complementary prism, planar, bipartite, $P_4$-tidy, complexity.}
\begin{document}
\publicationdetails{24}{2022}{1}{5}{8440}
\maketitle

\begin{abstract}
Let $G=(V(G),E(G))$ be a finite simple undirected graph with vertex set $V(G)$, edge set $E(G)$ and vertex subset $S\subseteq V(G)$. $S$ is termed \emph{open-dominating} if every vertex of $G$ has at least one neighbor in $S$, and \emph{open-independent, open-locating-dominating} (an $OLD_{oind}$-set for short) if no two vertices in $G$  have the same set of neighbors in $S$, and each vertex in $S$ is open-dominated exactly once by $S$. The problem of deciding whether or not $G$ has an $OLD_{oind}$-set has important applications that have been reported elsewhere. As the problem is known to be $\mathcal{NP}$-complete, it appears  to be notoriously difficult as we show that its complexity remains the same even for just planar bipartite graphs of maximum degree five and girth six, and also for planar subcubic graphs of girth nine. Also, we present characterizations of both $P_4$-tidy graphs and the complementary prisms of cographs that have an $OLD_{oind}$-set.
\end{abstract}

\section{Introduction}
\label{intro}

Consider the situation where a graph $G$ models a facility or a multiprocessor network with limited-range detection devices (sensing for example, movement, heat or size) that are placed at chosen vertices of $G$. The purpose of these devices is to detect and precisely identify the location of an intruder such as a thief, saboteur, vandal, fire or faulty processor that may suddenly be present at any vertex.  

As it is costly to install and maintain such devices it is logical to determine the locations of the minimum number of devices that can, between them, precisely determine an intruder at any location.  This challenge is often called a \emph{location-detection} or an \emph{identification} problem and has been well-studied \cite{honkala2004locating,laifenfeld2009joint,ray2004robust}.  This objective is adopted throughout the present article. Sometimes such a device can determine if an intruder is in its neighborhood but cannot detect if the intruder is at its own location. In this case, it is required to find a so-called, \emph{open-locating-dominating} vertex subset $S$ (an $OLD$-set for short), which is a dominating set of $G$, such that every vertex in $G$ has at least one neighbor in $S$, and no two vertices in $G$ have the same set of neighbors in $S$. When a device may be prevented from detecting an intruder at its own location, it is necessary to install another device in its neighborhood. A natural way to analyze such situations is to make use of open neighborhood sets which may have useful additional properties, such as being open-independent, dominating, open-dominating or open-locating-dominating. A set $S$ is \emph{open-independent} if every member of $S$ has at most one neighbor in $S$. The other terms and those in the next paragraph are made more precise later in this section.

An alternative case arises when a device can determine if the intruder is either at its own location or is in the neighborhood of its location, but which actual location cannot be detected, and furthermore, each detector cannot be located in the range of any other due to possible signal interference. 
Such situations can potentially be analyzed via independent, locating-dominating sets \cite{slater2018}. 

Finding an $OLD$-set in a given graph, if it exists, is similar to the well-studied \emph{identifying code problem} \cite{Karpovsky-Chakrabarty-Levitin-1998}. An \emph{identifying code} $C\subseteq V(G)$ is a dominating set where for all $u,v \in V(G)$, $u \neq v$, one has $N[u]\cap C \neq N[v] \cap C$. The minimum cardinality of an identifying code of a graph $G$ is denoted by $IC(G)$. Although  $OLD$-sets and identifying codes are similar notions, the parameters $OLD(G)$ and $IC(G)$ are incomparable.
The concept of an open-locating-dominating set was first considered by Seo and Slater \cite{Seo-Slater-2010,seo2011}. The authors showed that to decide if a graph $G$ has such a set is an $\mathcal{NP}$-complete decision problem and they provided some useful results for $OLD$-sets in trees and grid graphs. Foucaud et al. \cite{foucaud2017-i} presented a linear time algorithm that can be used to construct an open-locating dominating set of minimum size for a given cograph, based on parsing its cotree structure. 
Kincaid et al. \cite{kincaid2015} established the optimal density of an OLD-set for infinite triangular grids. Savic et al. \cite{savic2018open} presented results on OLD-sets for some convex polytopes. The reader is referred to \cite{lobstein2012watching} for an on-line bibliography on this topic and related notions.
In this paper we consider the following more restrictive problem:

  \begin{flushleft}
    \fbox{
      \begin{minipage}{0.95\linewidth}
        \noindent {\textsc{OLD-OIND} (existence of an open-independent, open locating dominating set)}\\
        \textbf{Instance:} {A graph $G$.}\\
        \textbf{Question:} {Does $G$ have an open-independent, open locating dominating set?}
      \end{minipage}}
    \medskip
  \end{flushleft} 
  
To the best of our knowledge, Seo and Slater \cite{Seo-Slater-2017} were the first to study open-independent, open-locating-dominating sets ($OLD_{oind}$-sets for short). They presented some results on $OLD_{oind}$-sets in paths, trees and infinite grid graphs, and characterized $OLD_{oind}$-sets in graphs with girth at least five. The authors also demonstrated that \textsc{OLD-OIND} is $\mathcal{NP}$-complete.
This complexity result was extended for complementary prisms by Cappelle et al. \cite{cappelle2020} who presented various properties and bounds on the sizes of minimal $OLD_{oind}$-sets in complementary prism graphs and showed that, if the girth of $G$ is at least four, the $OLD_{oind}$-set of its complementary prism, if it exists, can be found in polynomial time.
 
In this paper we analyze the existence of $OLD_{oind}$-sets in $P_4$-tidy graphs and in cographs, a subclass of the $P_4$-tidy class. The class of $P_4$-tidy graphs contains several other graph families having relatively few $P_4$’s, such as the $P_4$-sparse, $P_4$-lite, $P_4$-extendable and $P_4$-reducible graph families. The $P_4$-tidy graph class generalizes all of the just-mentioned graph families. It is well known that the $P_4$-tidy graph class is self-complementary and hereditary \cite{giakoumakis1997}.
 
Haynes et al. \cite{Haynes-Henning-Slater-Merwe-2007} investigated several graph theoretic properties of complementary prisms, such as independence, distance and domination. For further study on domination parameters in complementary prisms, see \cite{Gongora-Haynes-Jum-2013,Desormeaux-Haynes-2011,Desormeaux-Haynes-Vaughan-2013,Haynes-Henning-Merwe-2009,Haynes-Holmes-Koessler-2010} and for certain other parameters see \cite{bendali2019some,castonguay2019geodetic,meierling2015cycles,zatesko2019chromatic}. Cappelle et al. \cite{Cappelle-Penso-Rautenbach-2014} described a polynomial-time  recognition algorithm for complementary prisms. Although complementary prisms are a class of apparently well-behaved graphs, many $\mathcal{NP}$-complete problems for general graphs remain $\mathcal{NP}$-complete for this class, for example, finding an independent or a dominating set, or establishing $P_3$-convexity \cite{Duarte-Penso-Rautenbach-Santos-2017}.

\smallskip
\noindent \textit{Our contributions.}
It appears that \textsc{OLD-OIND} is notoriously difficult as we show that it remains $\mathcal{NP}$-complete even for just planar bipartite graphs of maximum degree five and girth six, and also for planar subcubic graphs of girth nine. However, we study some graph classes for which the problem can be solved in polynomial time and also present characterizations of both $P_4$-tidy graphs and complementary prisms of cographs that have an $OLD_{oind}$-set. 

\smallskip
\noindent \textit{Notation and terminology.} Throughout this paper $G = (V(G), E(G))$ is assumed to be a nontrivial finite simple undirected connected graph with vertex set $V(G)$ and edge set $E(G)$. A subgraph of $G$ with $n$ vertices that is a path (a cycle) is termed an $n$-path (an $n$-cycle), and is denoted by $P_n$ ($C_n$). A \textit{subcubic} graph is a graph in which each vertex has degree at most three, \textit{i.e.} no vertex is incident with more than three edges. The \emph{open neighborhood} of a vertex $v\in V(G)$ is denoted by $N_G(v) = \{u \in V(G) \mid uv \in E(G)\}$, and its \emph{closed neighborhood} is denoted by $N_G[v] = N_G(v) \cup \{v\}$. Let $S\subseteq V(G)$, $S \neq \emptyset$. The subgraph of $G$ induced by $S$ is denoted by $G[S]$. $S$ is termed \emph{dominating} if every vertex not in $S$ is adjacent to at least one member of $S$, \textit{i.e.}, $N_G[S] = \cup_{v\in S}N[v] = V(G)$. $S$ is \emph{locating-dominating} if it is dominating and no two distinct vertices of $V(G)\setminus{S}$ have the same set of neighbors in $S$, \textit{i.e.}, $\forall u,v \in V(G)\setminus S$, $u \neq v$, we have that $N(u)\cap S \neq N(v) \cap S$. Clearly, $V(G)$ is a locating-dominating set for any graph $G$. 
 
$S$ is \emph{open-dominating} (or \emph{total-dominating}) if every vertex $v \in V(G)$ has a neighbor in $S$, \textit{i.e.}, $\cup_{u\in S}N(u) = V(G)$. In this case $v$ is said to be \emph{open-dominated} by $S$. $S$ is \emph{independent} if no two vertices in $S$ are adjacent and is  \emph{open-independent} (an $OIND$-set for short) if every member of $S$ is open-dominated by $S$ at most once, \textit{i.e.}, $\forall\ v \in S,\ |N(v) \cap S| \leq 1$. Obviously, an open-dominating set cannot be independent. $S$ is \emph{open-locating-dominating} (an $OLD$-set for short) if it is open-dominating and no two distinct members of $V(G)$ have the same open neighborhood in $S$, \textit{i.e.}, for all $u,v \in V(G)$, $u \neq v$, one has $N(u)\cap S \neq N(v) \cap S$. In this case, $u$ and $v$ are said to be \emph{distinguished} by $S$. $G$ has an $OLD$-set if no two of its vertices have the same open neighborhood. The minimum cardinality of an $OLD$-set is denoted by $OLD(G)$, the open-locating-dominating number of $G$. If $G$ has an $OLD$-set $S$, then $S$ is called an $OLD(G)$-set if $|S| = OLD(G)$. We note that if $S$ exists, then the fact that every leaf and its support of $G$ must be in $S$ is helpful in the proofs of some of the theorems stated later in the present paper.

Let $G_1$ and $G_2$ be two graphs such that $V(G_1) \cap V(G_2) = \emptyset$. We denote the disjoint union (resp. join) of $G_1$ and $G_2$ by $G_1\oplus G_2$ (resp. $G_1\bowtie G_2$), and the complement graph of a graph $G$ by $\overline{G}$. 
An \emph{anticomponent} of a graph $G$ is the subgraph of $G$ induced by the vertex set of a connected component of $\overline{G}$. More precisely, an induced subgraph $H$ of $G$ is an anticomponent if $\overline{H}$ is a connected component of $G$. Notice that if $G_1,G_2,\ldots,G_k$ are the anticomponents of $G$, then $G = G_1 \bowtie G_2 \bowtie\cdots \bowtie G_k$.
A graph $G$ is termed \emph{co-connected} if $\overline{G}$ is connected. If $X\subseteq V(G)$, the subgraph of $G$ obtained by removing both the vertices in $X$ and the edges incident with them in $G$ is denoted by $G\setminus X$.
For a graph $H$ and a natural number $p$, the graph obtained by the disjoint union of $p$ copies of $H$ is denoted by $pH$. For integer $k\geq1$, the set $\{1,\ldots,k\}$ is denoted by $[k]$.

If an open-independent, open-locating-dominating set (an $OLD_{oind}$-set for short) exists in a given graph $G$, it is often of interest to establish a set of minimum size  among such sets in $G$, which is denoted by $OLD_{oind}(G)$. If $S$ is an $OLD_{oind}$-set for $G$, each component of $G[S]$ is isomorphic to $K_2$ (the complete graph on two vertices). 
See, for example, the graphs in Figures \ref{fig:ex1a} and \ref{fig:ex1b}, where an $OLD_{oind}$-set of each graph is represented by the black vertices. 

\begin{figure}[!ht]
\centering
 \begin{subfigure}[b]{.17\textwidth}
 \centering
  \begin{tikzpicture}[
  graph/.style={matrix of math nodes, ampersand replacement=\&, column sep=10pt, row sep=8pt, nodes={circle,inner sep=0pt,minimum size=4pt}},
  vSet/.style args = {(#1,#2)}{row #1 column #2/.style={nodes={draw,thin,fill=black!20!white}}},
  oldSet/.style args = {(#1,#2)}{row #1 column #2/.style={nodes={draw,thin,fill=black}}},
  oldSet/.list={(1,2),(1,3),(3,2),(3,3)},
  vSet/.list={(2,1),(2,4)},
]
\path[use as bounding box] (-1.70,-0.5) rectangle (0.7,0.5);
 \matrix [graph] (M)
 {
   {}\& {}\&[15pt] {}\& {}\& {}\&\\
   {}\& {}\& {}\& {}\& {}\&\\
   {}\& {}\& {}\& {}\& {}\&\\
 };
 \draw [thin]
   (M-1-2) -- (M-1-3)
   (M-1-3) -- (M-2-4)
   (M-2-1) -- (M-1-2)
   (M-2-1) -- (M-2-4)
   (M-2-1) -- (M-3-2)
   (M-3-2) -- (M-3-3)
   (M-3-3) -- (M-2-4);
\end{tikzpicture}
  \caption{$G$.}
  \label{fig:ex1a}
 \end{subfigure}\qquad
 \begin{subfigure}[b]{.17\textwidth}
 \centering
 \begin{tikzpicture}[
  graph/.style={matrix of math nodes, ampersand replacement=\&, column sep=10pt, row sep=10pt, nodes={circle,inner sep=0pt,minimum size=4pt}},
  vSet/.style args = {(#1,#2)}{row #1 column #2/.style={nodes={draw,thin,fill=black!20!white}}},
  oldSet/.style args = {(#1,#2)}{row #1 column #2/.style={nodes={draw,thin,fill=black}}},
  oldSet/.list={(1,3),(2,1),(2,4),(1,2)},
  vSet/.list={(3,2),(3,3)},
]
\path[use as bounding box] (-1.70,-0.6) rectangle (0.7,0.6);
 \matrix [graph] (M)
 {
   {}\& {}\&[15pt] {}\& {}\& {}\&\\
   {}\& {}\& {}\& {}\& {}\&\\
   {}\& {}\& {}\& {}\& {}\&\\
 };
 \draw [thin]
   (M-1-2) -- (M-2-4)
   (M-1-2) -- (M-3-2)
   (M-1-2) -- (M-3-3)
   (M-1-3) -- (M-2-1)
   (M-1-3) -- (M-3-2)
   (M-1-3) -- (M-3-3)
   (M-2-1) -- (M-3-3)
   (M-2-4) -- (M-3-2);
\end{tikzpicture}
 \caption{$\overline{G}$.}
  \label{fig:ex1b}
 \end{subfigure}\qquad
 \begin{subfigure}[b]{.5\textwidth}
 \centering
\begin{tikzpicture}[
  graph/.style={matrix of math nodes, ampersand replacement=\&, column sep=30pt, row sep=14pt, nodes={circle,inner sep=0pt,minimum size=4pt}},
  vSet/.style args = {(#1,#2)}{row #1 column #2/.style={nodes={draw,thin,fill=black!20!white}}},
  vSet/.list={(1,1),(1,2),(1,3),(1,4),(1,5),(1,6),(2,1),(2,2),(2,3),(2,4),(2,5),(2,6)},
]
\path[use as bounding box] (-4.60,-1.0) rectangle (3.5, 1.3);
 \matrix [graph] (M)
 {
   {}\& {}\& {}\& {}\& {}\& {}\& \\
   {}\& {}\& {}\& {}\& {}\& {}\& \\
 };
 \foreach \i  in {1,2,3,4,5,6}{
   \draw [thin] (M-1-\i) -- (M-2-\i) {};
 }
%
 \foreach \i/\j in {1/2,2/3,3/4,4/5,5/6}{
   \draw [thin] (M-2-\i) -- (M-2-\j) {};
 }
 \draw [thin]
   (M-1-1) to[out=30,in=150]  (M-1-3)
   (M-1-1) to[out=35,in=135]  (M-1-4)
   (M-1-1) to[out=40,in=140]  (M-1-5)
   (M-1-2) to[out=30,in=150]  (M-1-4)
   (M-1-2) to[out=40,in=140]  (M-1-6)
   (M-1-3) to[out=30,in=150]  (M-1-5)
   (M-1-3) to[out=35,in=145]  (M-1-6)
   (M-1-4) to[out=30,in=150]  (M-1-6)
   (M-2-1) to[out=340,in=200] (M-2-6)
   (M-2-2) to[out=340,in=200] (M-2-5);
\node also [label=right:{\footnotesize $V(\bar{G})$}] (M-1-6) {};
\node also [label=right:{\footnotesize$V(G)$}] (M-2-6) {};
\end{tikzpicture}
  \caption{$G\overline{G}$.}
  \label{fig:ex1c}
 \end{subfigure}
\caption{Example of a graph, its complement and the resulting complementary prism.}
\label{fig:ex1}
\end{figure}

Haynes et al. \cite{Haynes-Henning-Slater-Merwe-2007} introduced the so-called \emph{complementary product} of two graphs as a generalization of the well-known \emph{Cartesian product} of the graphs. As a particular case of complementary products, the authors define the \emph{complementary prism} of a graph $G$, denoted by $G\overline{G}$, as the graph formed from the disjoint union of $G$ and its complement $\overline{G}$ by adding the edges of the perfect matching between the corresponding vertices  of $G$ and $\overline{G}$, where $V(G\overline{G})=V(G)\cup V(\overline{G})$. For the purposes of illustration, a graph $G$, its complement $\overline{G}$ and the complementary prism $G\overline{G}$ are depicted  respectively, in Figures \ref{fig:ex1a}, \ref{fig:ex1b} and \ref{fig:ex1c}. To simplify matters, $G$ and $\overline{G}$ are used to denote the subgraph copies of $G$ and $\overline{G}$, respectively, in $G\overline{G}$. 
For a set $X \subseteq V(G)$, let $\overline{X}$ denote the corresponding vertices of $X$ in $V(\overline{G})$.

\section{Complexity results}
\label{sec:pre}

Open-independent, open-locating-dominating sets were first studied by Seo and Slater \cite{Seo-Slater-2017} who stated necessary and sufficient conditions for the existence of an $OLD_{oind}$-set in a graph $G$ that has girth $g(G) \geq 5$. For general graphs (with arbitrary girth), the conditions stated in Theorem \ref{Seo-Sl-g(5)} are necessary but not sufficient, as is stated in Theorem \ref{OO-general}. We frequently use Theorem \ref{OO-general} in our proofs below, sometimes without mentioning the fact.

\begin{mytheorem}[\cite{Seo-Slater-2017}] \label{Seo-Sl-g(5)}
If a graph $G$ has girth satisfying $g(G) \geq 5$ and $S \subseteq V(G)$, then $S$ is an $OLD_{oind}$-set iff ($i$) each $v \in S$ is open-dominated by $S$ exactly once and ($ii$) each $v\notin S$ is open-dominated by $S$ at least twice.
\end{mytheorem}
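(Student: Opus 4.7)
The plan is to establish the biconditional by handling each direction separately, and to isolate the one subcase where the girth hypothesis $g(G)\ge 5$ is actually used. Everything else will follow from direct manipulation of the definitions.

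First I would dispatch the forward implication. Assuming $S$ is an $OLD_{oind}$-set, condition (i) is essentially automatic: open-domination forces $|N(v)\cap S|\ge 1$ for every $v\in S$, and open-independence forces $|N(v)\cap S|\le 1$, so the two together give exactly one $S$-neighbor. For (ii), I would argue by contradiction: if some $v\notin S$ had a unique $S$-neighbor $u$, then (i) applied to $u$ produces a unique $S$-neighbor $w$ of $u$, and then (i) applied to $w$ forces $N(w)\cap S=\{u\}$. Hence $N(w)\cap S=\{u\}=N(v)\cap S$ with $w\ne v$ (since $w\in S$ and $v\notin S$), contradicting the locating property. Note that this direction works in any graph.

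For the reverse implication, assume (i) and (ii). Open-independence is immediate from (i), and open-domination follows from (i) and (ii) together since every vertex has at least one $S$-neighbor. The remaining task is to verify that $S$ distinguishes any two distinct vertices $u,v$. I would proceed by case analysis on whether $u,v$ lie in $S$. If both lie in $S$, then by (i) they share a common unique $S$-neighbor $w$, which would then have at least two $S$-neighbors, namely $u$ and $v$, contradicting (i) applied to $w$. If exactly one of $u,v$ lies in $S$, then (i) forces $|N(u)\cap S|=1$ and this equals $|N(v)\cap S|$, contradicting (ii) for the one outside $S$.

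The main (and only) obstacle is the remaining case where both $u,v$ lie outside $S$. Here (ii) furnishes two distinct common $S$-neighbors $x,y$, and then $u\,x\,v\,y\,u$ is a closed walk of length $4$ on four distinct vertices, i.e.\ a $C_4$ in $G$; this contradicts $g(G)\ge 5$. This is precisely the step that fails for general graphs and explains why the clean characterization requires girth at least five. Once this case is handled the proof is complete.
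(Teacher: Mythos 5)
Your proof is correct. Note that the paper itself does not prove this statement at all: it is quoted from Seo and Slater \cite{Seo-Slater-2017} as a known result, so there is no internal proof to compare against. Your argument is a sound self-contained derivation, and it correctly isolates the only place where $g(G)\geq 5$ is needed, namely the case of two vertices $u,v\notin S$ with the same $S$-neighborhood, where two common $S$-neighbors would create a $4$-cycle. Your forward direction is exactly the content of the paper's Theorem \ref{OO-general} (valid for arbitrary girth), which the paper also only cites, so your three-step argument ($v\notin S$ with unique $S$-neighbor $u$, then $u$'s unique $S$-neighbor $w$ satisfies $N(w)\cap S=\{u\}=N(v)\cap S$) is a nice bonus. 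One pedantic point: when proving (ii) you should note explicitly that a vertex $v\notin S$ cannot have zero $S$-neighbors, which is immediate from open-domination; with that sentence added the case analysis is exhaustive.
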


\begin{mytheorem}[\cite{cappelle2019}] 
\label{OO-general}
If $S \subseteq V(G)$ is an $OLD_{oind}$-set of a graph $G$, then ($i$) each $v \in S$ is open-dominated exactly by $S$ once, and ($ii$) each $v\in V(G)\setminus S$ is open-dominated by $S$ at least twice.
\end{mytheorem}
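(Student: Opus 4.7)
The plan is to derive both conclusions directly from the three defining properties of an $OLD_{oind}$-set, namely open-domination, open-independence, and the locating property, with (ii) being a short contradiction argument that relies on (i).

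For part (i), I would simply combine two inequalities. Since $S$ is open-dominating, every $v \in V(G)$ satisfies $|N(v) \cap S| \geq 1$; in particular this holds for $v \in S$. Since $S$ is open-independent, every $v \in S$ satisfies $|N(v) \cap S| \leq 1$ by definition. These two bounds pin $|N(v) \cap S|$ to exactly $1$ for each $v \in S$.

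For part (ii), I would argue by contradiction. Suppose there exists $v \in V(G) \setminus S$ with $|N(v) \cap S| = 1$, say $N(v) \cap S = \{u\}$ with $u \in S$. By part (i), the vertex $u$ has exactly one neighbor in $S$; call it $w$. Since $uw \in E(G)$ we have $u \in N(w)$, and because $|N(w) \cap S| = 1$ (again by part (i)) it follows that $N(w) \cap S = \{u\}$. Thus $v$ and $w$ are two \emph{distinct} vertices of $G$ (as $v \notin S$ but $w \in S$) with $N(v) \cap S = N(w) \cap S = \{u\}$, contradicting the open-locating-dominating condition that distinct vertices are distinguished by $S$.

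The argument is essentially a short unpacking of definitions, so the only conceptual step is recognizing that part (i) forces the unique $S$-neighbor $u$ of $v$ to itself have a unique $S$-neighbor, producing a vertex $w \in S$ whose open $S$-neighborhood collides with that of $v$. I do not anticipate any real obstacle; no case analysis or structural hypothesis on $G$ (such as a girth condition, as in Theorem~\ref{Seo-Sl-g(5)}) is required, which is exactly why this weaker necessary condition holds for graphs of arbitrary girth.
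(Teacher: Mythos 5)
Your proof is correct. Note that the paper does not prove this statement at all---it is imported from \cite{cappelle2019}---so there is no in-paper argument to compare against; your derivation (open-domination and open-independence force $|N(v)\cap S|=1$ for $v\in S$, and a vertex $v\notin S$ dominated only by $u\in S$ would share its $S$-neighborhood $\{u\}$ with the unique $S$-neighbor $w$ of $u$, with $v\neq w$ since $w\in S$, contradicting the locating condition) is the standard and complete one, with the case $|N(v)\cap S|=0$ already excluded by the open-domination bound you invoked in part (i).
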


Seo and Slater \cite{Seo-Slater-2017} presented some results about $OLD_{oind}$-sets in trees. The authors showed that every leaf and its neighbor are contained in any $OLD_{oind}$-set of any tree $T$, if $T$ has such a  set. Furthermore, they recursively defined the collection of trees that have unique $OLD_{oind}$-sets. On the other hand, 
they showed that \textsc{OLD-OIND} is an $\mathcal{NP}$-complete problem for general graphs. They present a reduction from the 3-SAT problem. Indeed, by reducing from the $\mathcal{NP}$-complete problem Planar 3-SAT, it is possible to prove $\mathcal{NP}$-completeness for planar graphs by using their construction. Recently, it was proved that \textsc{OLD-OIND} is $\mathcal{NP}$-complete for the complementary prisms of a given graph $G$ \cite{cappelle2020}. 
We use similar ideas to those in \cite{lu2002} for efficient edge domination, showing that \textsc{OLD-OIND} is $\mathcal{NP}$-complete even for the special cases when $G$ is either a planar bipartite graph of maximum degree five and girth six or a planar subcubic graph of girth nine, by polynomial reduction from the following decision problem.

\begin{flushleft}
    \fbox{
      \begin{minipage}{0.95\linewidth}
        \noindent {\textsc{Restricted X3C} (exact cover by 3-sets)}\\
        \textbf{Instance:} {A finite set $X$ with $|X| = 3n$ and a collection ${\cal S}$ of 3-element subsets of $X$ such that each element of $X$ is in at most 3 subsets, with $|{\cal S}| = m$, where $n$ and $m$ are positive integers.}\\
        \textbf{Question:} {Does ${\cal S}$ contain an exact cover of $X$, \textit{i.e.} a pairwise disjoint subcollection ${\cal S}'\subseteq{\cal S}$  such that every element of $X$ occurs in exactly one member of {${\cal S}'$?}}
      \end{minipage}}
    \medskip
  \end{flushleft}

It is well known that \textsc{Restricted X3C} is $\mathcal{NP}$-complete \cite{garey1982}. 
Each instance of \textsc{Restricted X3C}, say $X = \{x_1, x_2, \ldots,x_{3n}\}$ and ${\cal S} = \{S_1, S_2,\ldots,S_m\}$, can be associated with a bipartite graph $G_A = (V_A, E_A)$, where
$V_A =X \cup S$ and $E_A =\{(x_i, S_j) : 1 \leq i \leq 3n, 1 \leq j \leq m, \mbox{ and } x_i \in S_j\}$. If the associated bipartite graph $G_A$ is planar, the problem is termed the \emph{planar restricted exact cover by 3-sets problem} (\textsc{Planar Restricted X3C}), which is also $\mathcal{NP}$-complete \cite{dyer1986}.

\begin{mytheorem} \label{theo:npc1}
Deciding, for a given planar bipartite graph $G$ of maximum degree five and girth six, if $G$ has an $OLD_{oind}$-set is an $\mathcal{NP}$-complete problem.
\end{mytheorem}

\begin{proof}
It is possible to verify in polynomial time if a given set $D\subseteq V(G)$ is an $OLD_{oind}$-set of $G$. So, \textsc{OLD-OIND} is in $\mathcal{NP}$. We now show that \textsc{Planar Restricted X3C} is reducible in polynomial time to the problem \textsc{OLD-OIND} for planar bipartite graphs of maximum degree five and girth six. Let $X = \{x_1, x_2, \ldots,x_{3n}\}$ and ${\cal S} = \{S_1, S_2,\ldots,S_m\}$ be an instance of \textsc{Planar Restricted X3C}.
We now construct a graph $G = (V(G),E(G))$, as illustrated in Figure \ref{fig:OOPB}, where
$V(G) ={\{x_i : 1\leq i\leq3n\}}\cup{\{y_{ij} : 1 \leq i \leq 3n, 1 \leq j \leq m} \mbox{ and } x_i \in S_j\} \cup \{S_j,a_j,b_j,c_j,d_j,e_j : 1 \leq j \leq m\}$, and
$V(E) =\{x_iy_{ij},y_{ij}S_j : 1 \leq i \leq 3n, 1 \leq j \leq m, x_i \in S_j\}
 \cup\{S_ja_j,a_jb_j,b_jc_j,c_jd_j,d_je_j,e_jS_j : 1 \leq j \leq  m\}$. The resulting graph $G$ has $3n+9m$ vertices, and can be constructed in polynomial time. 

Clearly, $G$ is planar because $(X,S)$ is an instance of \textsc{Planar Restricted X3C}. Furthermore, $G$ is bipartite as $V(G)$ can be partitioned into the subsets: $\{x_i : 1\leq i\leq3n\}\cup \{S_j,b_j,d_j : 1 \leq j \leq m\}$ and $\{y_{ij} : 1 \leq i \leq 3n, 1 \leq j \leq m \mbox{ and } x_i \in S_j\}\cup \{a_j,c_j,e_j : 1 \leq j \leq m\}$.
It may also be noted that $G$ has maximum degree five, since each element of $X$ is in at most 3 subsets, and also it is easy to verify that $G$ has girth six. We are now going to prove that ($i$) $G$ has an $OLD_{oind}$-set if and only if ($ii$) there is a subset ${\cal S}' \subseteq {\cal S}$ that is an exact cover of $X$. 

To prove that ($i$) is necessary for ($ii$), assume that ${\cal S}$ has an exact cover ${\cal S}'$. Let $D=\{x_i,y_{ij},a_j,b_j,d_j,e_j\,:\,S_j \in {\cal S}' \mbox{ and } x_i \in S_j\}\cup\{S_j,a_j,c_j,d_j\,:\,S_j \in {\cal S}\setminus {\cal S}'\}$. Then $D$ contains $6n+4m$ vertices, which induce $3n+2m$ independent edges.  Note that every vertex $y_{ij}$ that is not in $D$ has its two neighbors in $D$, and the 6-cycle has four vertices in $D$. So it can be concluded that $D$ is an $OLD_{oind}$-set of $G$.

To prove that ($ii$) is necessary for ($i$), assume that $G$ has an $OLD_{oind}$-set $D$. Observe that $D$ does not contain both $S_j$ and $y_{ij}$, else $a_j,e_j \notin D$, which is impossible since by the open-independence of $D$, at most two of the vertices in the set $\{b_j,c_j,d_j\}$ are in $D$ and then there are two vertices of the 6-cycle $\langle S_j,a_j,b_j,c_j,d_j,e_j\rangle$ which are dominated exactly once by $D$. So $D$ is not an $OLD_{oind}$-set of $G$. The fact that $\{S_j,y_{ij}\}$ is not a subset of $D$ for all $j \in [m]$; implies that every $x_i\in D$ and it has a neighbor $y_{ij}$ for some $j\in [m]$ such that $y_{ij}\in D$. Note that altogether, these  results imply that for every $S_j$, the three vertices $y_{ij}$ are either (a) all in $D$, or (b) none are in $D$. If (a) then $S_j\notin D$ and $\{a_j,b_j,d_j,e_j\}\subseteq D$. If (b) then $S_j\in D$ and exactly one of the two symmetric subsets $\{a_j,c_j,d_j\}$ and $\{b_j,c_j,e_j\}$ are contained in $D$. Let ${\cal S}' =\{S_j  :  1 \leq j \leq m \mbox{ and } S_j \notin D\}$. Clearly, ${\cal S}'$ is an exact cover of ${\cal S}$.
\end{proof}

\begin{figure}
    \centering
    \begin{tikzpicture}[
  graph/.style={matrix of math nodes, ampersand replacement=\&, column sep=20pt, row sep=3pt, nodes={circle,inner sep=0pt,minimum size=3pt}},
  vSet/.style args = {(#1,#2)}{row #1 column #2/.style={nodes={draw,thick,fill=black!20!white}}},
  oldSet/.style args = {(#1,#2)}{row #1 column #2/.style={nodes={draw,thick,fill=black}}},
  every label/.append style={font=\scriptsize,label distance=-1pt},
  oldSet/.list={
   (1,2),
   (2,1),(2,4),(2,5),
   (3,2),(3,3),(3,6),
   (4,1),(4,4),(4,5),
   (5,2),
   (6,2),
   (7,1),(7,4),(7,5),
   (8,2),(8,3),(8,6),
   (9,1),(9,4),(9,5),
   (10,2),
   (11,2),
   (12,1),(12,4),(12,5),
   (13,2),(13,3),(13,6),
   (14,1),(14,4),(14,5),
   (15,2)
  },
]
 \matrix [graph] (M) 
 {
   {}\& [30pt] {}\& [30pt]{}\& {}\& {}\& {}\\
   {}\& {}\& {}\& {}\& {}\& {}\\
   {}\& {}\& {}\& {}\& {}\& {}\\
   {}\& {}\& {}\& {}\& {}\& {}\\
   {}\& {}\& {}\& {}\& {}\& {}\\[7pt]
   {}\& {}\& {}\& {}\& {}\& {}\\
   {}\& {}\& {}\& {}\& {}\& {}\\
   {}\& {}\& {}\& {}\& {}\& {}\\
   {}\& {}\& {}\& {}\& {}\& {}\\
   {}\& {}\& {}\& {}\& {}\& {}\\[7pt]
   {}\& {}\& {}\& {}\& {}\& {}\\
   {}\& {}\& {}\& {}\& {}\& {}\\
   {}\& {}\& {}\& {}\& {}\& {}\\
   {}\& {}\& {}\& {}\& {}\& {}\\
   {}\& {}\& {}\& {}\& {}\& {}\\[7pt]
 };
 \foreach \i/\j in {2/1,4/3,4/6,7/11,9/5,9/8,12/13,14/10,14/15}
  \draw [thin] (M-\i-1) -- (M-\j-2);
 \foreach \i/\j in {3/1,3/3,3/5,8/6,8/8,8/10,13/11,13/13,13/15}
  \draw [thin] (M-\i-3) -- (M-\j-2);
 \foreach \i/\j in {3/2,3/4,8/7,8/9,13/12,13/14}
  \draw [thin] (M-\i-3) -- (M-\j-4)
               (M-\i-6) -- (M-\j-5) -- (M-\j-4);
 \foreach[count=\c] \i in {2,4,7,9,12,14}
  \node also [label={180:$x_\c$}] (M-\i-1) {};
 \foreach \i/\l/\a/\d in 
 {
 1/11/70/-2,
 3/21/110/-4,
 5/41/100/-4,
 6/22/70/-2,
 8/42/110/-4,
 10/62/100/-4,
 11/33/70/-2,
 13/53/260/-4,
 15/63/280/-2}
 \node also [label={[label distance =\d pt]\a:$y_{\l}$}] (M-\i-2) {};
 \foreach[count=\c,evaluate=\i as \j using int(\i-1),evaluate=\i as \k using int(\i+1)] \i in {3,8,13}{
  \node also [label={90:$S_\c$}] (M-\i-3) {};
  \node also [label={90:$a_\c$}] (M-\j-4) {};
  \node also [label={90:$b_\c$}] (M-\j-5) {};
  \node also [label={90:$c_\c$}] (M-\i-6) {};
  \node also [label={270:$d_\c$}] (M-\k-5) {};
  \node also [label={270:$e_\c$}] (M-\k-4) {};
 }
\end{tikzpicture}
    \caption{Graph $G$ that is an instance of \textsc{OLD-OIND} for planar bipartite graphs of maximum degree five and girth six as in the proof of Theorem \ref{theo:npc1}). ${\cal S} = \{S_1, S_2, S_3\} = \{\{x_1, x_2, x_4\}, \{x_2, x_4, x_6\}, \{x_3, x_5, x_6\}\}$.}
    \label{fig:OOPB}
\end{figure}

On the other hand, if odd cycles are allowed, \textit{i.e.}, the graph can be non-bipartite, we can modify the construction in the proof of Theorem \ref{theo:npc1} to add the restriction that the planar graph is subcubic. In this case, the instances have girth nine.

\begin{mytheorem}  \label{theo:npc2}
Deciding, for a given planar subcubic graph $G$ of girth nine, if $G$ has an $OLD_{oind}$-set is an $\mathcal{NP}$-complete problem.
\end{mytheorem}
\begin{proof}
It is possible to verify in polynomial time if a given set $D\subseteq V(G)$ is an $OLD_{oind}$-set of $G$. So, \textsc{OLD-OIND} is in $\mathcal{NP}$. We now show that, for planar subcubic graphs of girth nine, \textsc{Planar Restricted X3C} is reducible in polynomial time to the problem \textsc{OLD-OIND}. Let $X = \{x_1, x_2, \ldots,x_{3n}\}$ and ${\cal S} = \{S_1, S_2,\ldots,S_m\}$ be an instance of \textsc{Planar Restricted X3C}.
We now construct a graph $G = (V(G),E(G))$, as illustrated in Figure \ref{fig:OOX3C}. Let $X=\{x_1,\ldots,x_{3n}\}$, $Y=\{y_{ij} : 1 \leq i \leq 3n, 1 \leq j \leq m \mbox{ and } x_i \in S_j\}$, and let $C_j=\{a_j,b_j,c_j,d_j,e_j,f_j,g_j,h_j,k_j\}$, for every $j \in [m]$.
Let $V(G)$ be the set containing  $X\cup Y$ and the set $C_j$, for every $j \in [m]$. Add the edges that form a 9-cycle among the vertices in $C_j$, for every $j \in [m]$; add the edges of a perfect matching between the three vertices $y_{ij}$ and the vertices in the set $D^*_j=\{c_j,k_j,f_j\}$, for every $j \in [m]$; and finally, add the edges in $\{x_iy_{ij} : 1 \leq i \leq 3n, 1 \leq j \leq m \mbox{ and } x_i \in S_j\}$. The resulting graph $G$ has $3n+12m$ vertices.

Clearly, $G$ is planar because $(X,S)$ is an instance of \textsc{Planar Restricted X3C}. Furthermore, $G$ has maximum vertex degree at most 3, since each vertex $x_i$ has at most three neighbors in $Y$, and it is easy to verify that $G$ has girth nine. We are now going to prove that ($i$) $G$ has an $OLD_{oind}$-set if and only if ($ii$) there is a subset ${\cal S}' \subseteq {\cal S}$ that is an exact cover of $X$. 

For every $j \in [m]$, consider the sets $D^1_j=\{a_j,b_j,d_j,e_j,g_j,h_j\}$, $D^2_j=\{b_j,e_j,h_j\}$ and $D^3_j=\{a_j,d_j,g_j\}$.
To prove that ($i$) is necessary for ($ii$), assume that ${\cal S}$ has an exact cover ${\cal S}'$. 
Let $D$ be the set containing, the vertices in $D^1_j \cup\{x_i,y_{ij}\,:\, x_i \in S_j\}$, for every $j$ such that $S_j \in {\cal S}'$; and, additionally, the vertices in $D^*_j \cup D^2_j$, for every $j$ such that $S_j \notin {\cal S}'$. Then $D$ contains $6n+6m$ vertices, which induce $3n+3m$ independent edges.  Note that every vertex in every cycle induced by $C_j$ has two neighbors in $C_j \cap D$ and  every vertex $y_{ij}$ such that $S_j \notin {\cal S}'$ is not in $D$ has its two neighbors in $D$. So, it can be concluded that $D$ is an $OLD_{oind}$-set of $G$.

To prove that ($ii$) is necessary for ($i$), assume that $G$ has an $OLD_{oind}$-set $D$. First, we prove some claims.

\begin{claim} \label{claim1}
In every cycle induced by $C_j$, either a) vertices $k_j$, $c_j$ and $f_j$ are all in $D$, or b) none are in $D$.
\end{claim}
\noindent\textbf{Proof of Claim \ref{claim1}:} By inspection, we can verify that exactly six vertices of $C_j$ are in any $OLD_{oind}$-set of $G$. So the result follows by symmetry. $\Box$

\begin{claim}\label{claim2}
The set $D$ does not contain both $y_{ij}$ and a vertex in $D^*_j$, for all $j \in [m]$. 
\end{claim}
\noindent\textbf{Proof of Claim \ref{claim2}:} By contradiction, suppose that $D$ contains, for some $j \in [m]$, a vertex $y_{ij}$ and its neighbor in $D^*_j$. This implies, by Claim \ref{claim1}, that $D^*_j$ is a subset of $D$. Since $D$ is open-independent, $D \cap C_j$ has exactly three vertices, and there are six vertices of $C_j$ which are dominated by $D$ exactly once. So $D$ is not an $OLD_{oind}$-set of $G$.
$\Box$

By Claim \ref{claim2}, $D$ does not contain both $y_{ij}$ and a vertex in $D^*_j$. This implies that $X\subseteq D$ and every $x_i$ has a unique neighbor $y_{ij}$ for some $j\in [m]$ such that $y_{ij}\in D$. By Claim \ref{claim1}, for every $j \in [m]$, the three vertices $k_j$, $c_j$ and $f_j$ are either (a) all in $D$, or (b) none are in $D$. If (a) then   exactly one of the two symmetric subsets $D^2_j$ and $D^3_j$ are contained in $D$. If (b) then $D^1_j\subseteq D$. Let ${\cal S}' =\{S_j  :  1 \leq j \leq m \mbox{ and } D^*_j \cap D=\emptyset\}$. Clearly, ${\cal S}'$ is an exact cover of ${\cal S}$.
\end{proof}

\begin{figure}
    \centering
    \begin{tikzpicture}[
  graph/.style={matrix of math nodes, ampersand replacement=\&, column sep=20pt, row sep=4pt, nodes={circle,inner sep=0pt,minimum size=3pt}},
  vSet/.style args = {(#1,#2)}{row #1 column #2/.style={nodes={draw,thick,fill=black!20!white}}},
  oldSet/.style args = {(#1,#2)}{row #1 column #2/.style={nodes={draw,thick,fill=black}}},
  every label/.append style={font=\scriptsize,label distance=-1pt},
  oldSet/.list={
   (1,2),
   (2,1),(2,4),(2,5),(2,6),(2,7),
   (3,2),(3,3),
   (4,4),(4,5),(4,6),(4,7),
   (5,1),(5,2),
   (6,2),
   (7,1),(7,4),(7,5),(7,6),(7,7),
   (8,2),(8,3),
   (9,4),(9,5),(9,6),(9,7),
   (10,1),(10,2),
   (11,2),
   (12,1),(12,4),(12,5),(12,6),(12,7),
   (13,2),(13,3),
   (15,1),(14,4),(14,5),(14,6),(14,7),
   (15,2)
  },
]
 \matrix [graph] (M) 
 {
   {}\& [30pt] {}\& [30pt]{}\& {}\& {}\& {}\& {}\\[7pt]
   {}\& {}\& {}\& {}\& {}\& {}\& {}\\
   {}\& {}\& {}\& {}\& {}\& {}\& {}\\
   {}\& {}\& {}\& {}\& {}\& {}\& {}\\[7pt]
   {}\& {}\& {}\& {}\& {}\& {}\& {}\\[7pt]
   {}\& {}\& {}\& {}\& {}\& {}\& {}\\[7pt]
   {}\& {}\& {}\& {}\& {}\& {}\& {}\\
   {}\& {}\& {}\& {}\& {}\& {}\& {}\\
   {}\& {}\& {}\& {}\& {}\& {}\& {}\\[7pt]
   {}\& {}\& {}\& {}\& {}\& {}\& {}\\[7pt]
   {}\& {}\& {}\& {}\& {}\& {}\& {}\\[7pt]
   {}\& {}\& {}\& {}\& {}\& {}\& {}\\
   {}\& {}\& {}\& {}\& {}\& {}\& {}\\
   {}\& {}\& {}\& {}\& {}\& {}\& {}\\[7pt]
   {}\& {}\& {}\& {}\& {}\& {}\& {}\\[7pt]
 };
 \foreach \i/\j in {2/1,5/3,5/6,7/11,10/5,10/8,12/13,15/10,15/15}
  \draw [thin] (M-\i-1) -- (M-\j-2);
 \foreach \i/\j in {3/3,8/8,13/13}
  \draw [thin] (M-\i-3) -- (M-\j-2);
%

 \foreach \i/\j in {2/1,7/6,12/11}
  \draw [thin] (M-\i-6) to[out=140,in=355] (M-\j-2);
 \foreach \i/\j in {4/5,9/10,14/15}
  \draw [thin] (M-\i-6) to[out=220,in=5] (M-\j-2);
 \foreach \i/\j in {3/2,3/4,8/7,8/9,13/12,13/14}
  \draw [thin] (M-\i-3) -- (M-\j-4)
               (M-\j-7) -- (M-\j-6) -- (M-\j-5) -- (M-\j-4);
  \draw [thin] (M-2-7) -- (M-4-7)
               (M-7-7) -- (M-9-7)
               (M-12-7) -- (M-14-7);
 \foreach[count=\c] \i in {2,5,7,10,12,15}
  \node also [label={180:$x_\c$}] (M-\i-1) {};
 \foreach \i/\l/\a/\d in 
 {
 1/11/70/-2,
 3/21/110/-4,
 5/41/100/-4,
 6/22/70/-2,
 8/42/110/-4,
 10/62/100/-4,
 11/33/70/-2,
 13/53/260/-4,
 15/63/280/-2}
 \node also [label={[label distance =\d pt]\a:$y_{\l}$}] (M-\i-2) {};
 \foreach[count=\c,evaluate=\i as \j using int(\i-1),evaluate=\i as \k using int(\i+1)] \i in {3,8,13}{
  \node also [label={90:$k_\c$}] (M-\i-3) {};
  \node also [label={90:$a_\c$}] (M-\j-4) {};
  \node also [label={90:$b_\c$}] (M-\j-5) {};
  \node also [label={90:$c_\c$}] (M-\j-6) {};
  \node also [label={90:$d_\c$}] (M-\j-7) {};
  \node also [label={270:$e_\c$}] (M-\k-7) {};
  \node also [label={270:$f_\c$}] (M-\k-6) {};
  \node also [label={270:$g_\c$}] (M-\k-5) {};
  \node also [label={270:$h_\c$}] (M-\k-4) {};
 }
\end{tikzpicture}
    \caption{Graph $G$ that is an instance of \textsc{OLD-OIND} for planar subcubic graphs of girth nine as in the proof of Theorem \ref{theo:npc2}. ${\cal S} = \{S_1, S_2, S_3\} = \{\{x_1, x_2, x_4\}, \{x_2, x_4, x_6\}, \{x_3, x_5, x_6\}\}$.}
    \label{fig:OOX3C}
\end{figure}


\section{{$P_4$}-tidy graphs}
\label{sec:p4-tidy}


In this section, we study $OLD_{oind}$-sets in $P_4$-tidy graphs. A graph $G = (V,E)$ is termed \emph{$P_4$-tidy} if, for every vertex set $A$ inducing a $P_4$ in $G$, there is at most one vertex $v \in V \setminus A$ such that $G[A \cup \{v\}]$ contains at least two induced $P_4$'s. This class includes spider and quasi-spider graphs, which we now define.

A \emph{spider} is a graph whose vertex set has a partition $(C, X, H)$, where $C = \{c_1,\ldots, c_k\}$ and $X = \{x_1,\ldots, x_k\}$ for a given integer $k \geq 2$ are respectively, a clique and an independent set; $x_i$ is adjacent to $c_j$ iff $i = j$ (a thin spider), or $x_i$ is adjacent to $c_j$ if and only if $i \neq j$ (a thick spider); and every vertex of $H$ is adjacent to each vertex of $C$ and is not adjacent to any vertex of $X$. The size $k$ of both $C$ and $X$ is called the \emph{weight} of the spider and the set $H$ in the partition is called its \emph{head}. Notice that if $k=2$, then the thick and thin spider graphs are isomorphic. A \textit{quasi-spider} is a graph obtained from a spider that has vertex partition $(C, X, H)$ by replacing at most one vertex of $C \cup X$ by a $K_2$ or a $\overline{K_2}$ (where each vertex of the $K_2$ or the $\overline{K_2}$ has the same adjacency structure as the vertex it replaced). 

\begin{figure}[!ht]
\centering
 \begin{subfigure}[t]{.20\textwidth}
 \centering
  \begin{tikzpicture}[
graph/.style={matrix of math nodes, ampersand replacement=\&, column sep=10pt, row sep=10pt, nodes={circle,inner sep=0pt,minimum size=3pt}},
vSet/.style args = {(#1,#2)}{row #1 column #2/.style={nodes={draw,thick,fill=black!20!white}}},
oldSet/.style args = {(#1,#2)}{row #1 column #2/.style={nodes={draw,thick,fill=black}}},
oldSet/.list={(1,2),(1,4),(2,1),(2,3),(2,5),(3,1),(3,3),(3,5)},
]
\path[use as bounding box] (-1.0,-0.5) rectangle (1.0, 0.95);
\matrix [graph] (M)
{
 {}\& {}\& {}\& {}\& {}\\
 {}\& {}\& {}\& {}\& {}\\
 {}\& {}\& {}\& {}\& {}\\
};
\draw [thin]
(M-2-1) to [out=90,in=90,looseness=1.6] (M-2-5)
(M-1-2) -- (M-1-4)
(M-1-2) -- (M-2-1)
(M-1-2) -- (M-2-3)
(M-1-2) to [out=40,in=110,looseness=1.3] (M-2-5)
(M-1-4) to [out=140,in=70,looseness=1.3] (M-2-1)
(M-1-4) -- (M-2-3)
(M-1-4) -- (M-2-5)
(M-2-1) -- (M-2-3)
(M-2-1) -- (M-3-1)
(M-2-3) -- (M-3-3)
(M-2-3) -- (M-2-5)
(M-2-5) -- (M-3-5);
\end{tikzpicture}
  \caption{{A thin spider of weight three with a nonempty head.}}
  \label{fig:spider1}
 \end{subfigure}\qquad
 \begin{subfigure}[t]{.20\textwidth}
 \centering
 \begin{tikzpicture}[
graph/.style={matrix of math nodes, ampersand replacement=\&, column sep=5.4pt, row sep=25pt, nodes={circle,inner sep=0pt,minimum size=3pt}},
vSet/.style args = {(#1,#2)}{row #1 column #2/.style={nodes={draw,thick,fill=black!20!white}}},
oldSet/.style args = {(#1,#2)}{row #1 column #2/.style={nodes={draw,thick,fill=black}}},
oldSet/.list={(1,1),(1,3),(1,5),(1,7),(2,2),(2,4),(2,6)},
]
\matrix [graph] (M)
{
 {}\& {}\& {}\& {}\& {}\& {}\& {}\\ 
 {}\& {}\& {}\& {}\& {}\& {}\& {}\\
};
 \foreach \i/\j in {1/4,1/6,3/4,3/6,5/2,5/6,7/2,7/4}
  \draw [thin] (M-1-\i) -- (M-2-\j);
 \foreach \i [evaluate=\i as \j using int(\i+2)] in {1,3,5} 
  \draw [thin] (M-1-\i) -- (M-1-\j);
 \draw [thin]
  (M-1-1) to[out=20,in=140]  (M-1-5)
  (M-1-1) to[out=40,in=140]  (M-1-7)
  (M-1-3) to[out=40,in=160]  (M-1-7);
\end{tikzpicture}
 \caption{{A thick quasi-spider of weight three.} }
  \label{fig:spider2}
 \end{subfigure}\qquad
  \begin{subfigure}[t]{.20\textwidth}
 \centering
 \begin{tikzpicture}[
  graph/.style={matrix of math nodes, ampersand replacement=\&, column sep=5pt, row sep=25pt, nodes={circle,inner sep=0pt,minimum size=3pt}},
  vSet/.style args = {(#1,#2)}{row #1 column #2/.style={nodes={draw,thick,fill=black!20!white}}},
  oldSet/.style args = {(#1,#2)}{row #1 column #2/.style={nodes={draw,thick,fill=black}}},
  oldSet/.list={(1,2),(1,4),(1,5),(2,1),(2,3),(2,4),(2,5)},
]
 \matrix [graph] (M)
 {
   {}\& {}\& {}\& [10pt]{}\& [10pt]{}\\ 
   {}\& {}\& {}\& {}\& {}\\
 };
 \draw [thin]
   (M-1-2) -- (M-2-1)
   (M-1-2) -- (M-2-3)
   (M-1-2) -- (M-1-4)
   (M-1-4) -- (M-2-4)
   (M-1-4) -- (M-1-5)
   (M-1-5) -- (M-2-5);
 \draw [thin]
   (M-1-2) to[out=30,in=150]  (M-1-5);
\end{tikzpicture}
 \caption{{A thin quasi-spider of weight three.}}
  \label{fig:spider3}
 \end{subfigure}\qquad
   \begin{subfigure}[t]{.25\textwidth}
 \centering
 \begin{tikzpicture}[
  graph/.style={matrix of math nodes, ampersand replacement=\&, column sep=5pt, row sep=25pt, nodes={circle,inner sep=0pt,minimum size=3pt}},
  vSet/.style args = {(#1,#2)}{row #1 column #2/.style={nodes={draw,thick,fill=black!20!white}}},
  oldSet/.style args = {(#1,#2)}{row #1 column #2/.style={nodes={draw,thick,fill=black}}},
  oldSet/.list={(1,2),(1,4),(2,1),(2,3),(2,4)},
]
 \matrix [graph] (M)
 {
   {}\& {}\& {}\& [10pt]{}\\ 
   {}\& {}\& {}\& {}\\
 };
 \draw [thin]
   (M-1-2) -- (M-2-1)
   (M-1-2) -- (M-2-3)
   (M-1-2) -- (M-1-4)
   (M-1-4) -- (M-2-4)
   (M-2-1) -- (M-2-3);
\node also [label=right:{\scriptsize $Z$}] (M-1-4) {};
\end{tikzpicture}
 \caption{{The $Z$ graph: a thick and thin quasi-spider of weight two.}}
  \label{fig:spider4}
 \end{subfigure}\qquad
\caption{Examples of spider and quasi-spider graphs.}
\label{fig:ex-spiders}
\end{figure}

The following is a structural theorem for $P_4$-tidy graphs in terms of spider and quasi-spider graphs. Spiders and quasi-spiders are co-connected graphs.

\begin{mytheorem} [\cite{giakoumakis1997}] \label{p4-tidy}
$G$ is a $P_4$-tidy graph iff if exactly one of the following statements holds:
\begin{enumerate} [topsep=0pt,itemsep=-1ex,partopsep=1ex,parsep=1ex]
    \item $G$ is the union or the join of two $P_4$-tidy graphs;
    \item $G$ is a spider or a quasi-spider graph with partition $(C, X, H)$ such that either $H$ induces a $P_4$-tidy graph or is empty;
    \item $G$ is isomorphic to $C_5$, $P_5$, $\overline{P_5}$, or $K_1$.
\end{enumerate}
\end{mytheorem}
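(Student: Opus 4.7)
My plan is to prove the equivalence by handling each direction separately, using the standard modular-decomposition toolkit together with a careful analysis of how vertices attach to a fixed induced $P_4$. The backward direction I would verify directly, case by case. The finite exceptions $K_1, P_5, \overline{P_5}, C_5$ are handled by enumerating their induced $P_4$'s and checking that each has at most one \emph{partner} (a vertex $v$ outside the $P_4$ whose addition creates a second induced $P_4$). For (1), if $G = G_1\oplus G_2$ or $G = G_1 \bowtie G_2$, every induced $P_4$ in $G$ must lie entirely in one summand: a disjoint union has no cross-edges while a join has all cross-edges, so no $P_4$ can straddle the two parts; a partner of such a $P_4$ must likewise lie in the same summand, and the $P_4$-tidy property transfers. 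For (2), I would enumerate the induced $P_4$'s of a (quasi-)spider with head $H$ according to how many of their vertices lie in $C$, $X$ and $H$, and verify tidiness in each type; the delicate sub-case is the quasi-spider, where the two replacement vertices are themselves the unique partners of certain $P_4$'s.

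For the ``only if'' direction I would use modular decomposition. If $G$ or $\overline{G}$ is disconnected, then $G$ is a disjoint union or a join respectively; since $P_4$-tidiness is hereditary, each part is itself $P_4$-tidy, placing $G$ in case (1). Thus one may assume that both $G$ and $\overline{G}$ are connected. A short enumeration handles $|V(G)| \le 5$ and produces case (3), so suppose $|V(G)| \ge 6$. I would then fix an induced $P_4$ on $\{a,b,c,d\}$ and classify every remaining vertex by its adjacency pattern to these four; $P_4$-tidiness outlaws all but a handful of patterns. Propagating the constraints across every $P_4$ obtained by vertex substitution then forces the vertex set to partition into a clique $C$, an independent set $X$ attached to $C$ in a uniformly thin or thick matching, a set $H$ universal to $C$ and anti-universal to $X$, and at most one extra vertex that has been blown up into a $K_2$ or $\overline{K_2}$. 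Applying induction to the strictly smaller $P_4$-tidy graph $G[H]$ then completes case (2).

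The main obstacle is the prime case with $|V(G)|\ge 6$. The per-$P_4$ adjacency deductions are each elementary, but stitching them together into a global structure theorem --- especially ruling out a second blown-up vertex and forcing a single uniform thin-or-thick dichotomy between $C$ and $X$ across all overlapping induced $P_4$'s --- demands careful bookkeeping. This combinatorial heart of the argument is essentially the content of the Giakoumakis--Roussel--Thuillier decomposition, and is what makes the theorem substantive despite its clean statement.
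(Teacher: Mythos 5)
This theorem is not proved in the paper at all: it is quoted verbatim from Giakoumakis, Roussel and Thuillier \cite{giakoumakis1997} and used as a black box, so there is no in-paper argument to compare yours against. Judged on its own terms, your proposal is a reasonable outline of the standard route (reduce via disjoint union/join when $G$ or $\overline{G}$ is disconnected, enumerate small graphs, then analyse the co-connected case by attachment types to a fixed induced $P_4$), and your backward-direction observations are sound: no induced $P_4$ straddles a union or a join, a vertex universal to or anticomplete to a $P_4$ never creates a second induced $P_4$, and tidiness of (quasi-)spiders can be checked by a finite case analysis of where a $P_4$ can sit relative to $(C,X,H)$.

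However, there is a genuine gap exactly where you acknowledge it: the co-connected case with $|V(G)|\geq 6$. Saying that ``propagating the constraints across every $P_4$ obtained by vertex substitution forces'' the partition into a clique $C$, an independent set $X$ with a uniform thin or thick attachment, a head $H$, and at most one blown-up vertex is a statement of the desired conclusion, not a proof of it. The delicate points you would have to actually carry out are (a) showing that the thin/thick pattern is globally uniform rather than merely consistent on each $P_4$ you examine, (b) excluding a second replaced vertex, and (c) being careful that co-connected does not mean prime (a nonempty head $H$, or the pair of replacement vertices, is a nontrivial module), so an argument phrased around a prime/modular-decomposition dichotomy needs restating for the co-connected case the theorem actually addresses. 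Deferring this ``combinatorial heart'' to the cited decomposition means the proposal, as written, does not constitute an independent proof of the statement; it reproduces the easy reductions and outsources the substantive step.
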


The following two theorems settle which spiders and quasi-spiders have an $OLD_{oind}$-set. Let the graph in Figure \ref{fig:spider4} be denoted by $Z$. 

\begin{restatable}{mytheorem}{spidernotOO}
\label{spider-not-OO}
No spider has an $OLD_{oind}$-set.
\end{restatable}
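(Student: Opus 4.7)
My plan is to suppose, for contradiction, that a spider $G$ with partition $(C,X,H)$ of weight $k\geq 2$ (where $C=\{c_1,\ldots,c_k\}$ and $X=\{x_1,\ldots,x_k\}$) admits an $OLD_{oind}$-set $S$, and to invoke Theorem \ref{OO-general} throughout: every $v\in S$ has exactly one neighbor in $S$, and every $v\notin S$ has at least two. Since a thick spider with $k=2$ is isomorphic to the thin one, I would split the analysis into the thin case (any $k\geq 2$) and the thick case with $k\geq 3$.

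In the thin case, each $x_i$ has the single neighbor $c_i$, so having $x_i\notin S$ is impossible (it would need two $S$-neighbors but has degree one); condition (i) applied to $x_i\in S$ then forces $c_i\in S$. Thus $C\cup X\subseteq S$, but $c_1\in S$ already has the two $S$-neighbors $c_2$ and $x_1$, contradicting (i). This also disposes of the thick case with $k=2$ via the stated isomorphism.

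For the thick case with $k\geq 3$, the key observation is that for distinct $i,j$ we have $N(x_i)\,\triangle\, N(x_j)=\{c_i,c_j\}$, so the distinguishing requirement $N(x_i)\cap S\neq N(x_j)\cap S$ forces $\{c_i,c_j\}\cap S\neq\emptyset$. Hence at most one vertex of $C$ lies outside $S$, i.e.\ $|C\cap S|\geq k-1\geq 2$. On the other hand, $C$ is a clique, so three vertices of $C$ inside $S$ would give one of them two $S$-neighbors already within $C$, violating (i); hence $|C\cap S|\leq 2$. The two bounds pin down $k=3$ and $|C\cap S|=2$, say $c_1,c_2\in S$ and $c_3\notin S$. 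Condition (i) at $c_1$ then forces its unique $S$-neighbor to be $c_2$, ruling out every $x_j$ with $j\neq 1$ and every vertex of $H$ from $S$; symmetrically at $c_2$ this rules out $x_j$ with $j\neq 2$. Together they give $X\cap S=H\cap S=\emptyset$, and then $x_1\notin S$ has $N(x_1)\cap S=\{c_2,c_3\}\cap S=\{c_2\}$, only a single $S$-neighbor, contradicting (ii). The main obstacle is organising this thick case cleanly: the decisive move is to recognise that the distinguishing constraint on $X$ forces $|C\cap S|$ to be large while the clique constraint from (i) forces it to be small, squeezing the configuration into a single residual sub-case that is then refuted by inspecting one leaf-like vertex $x_1$.
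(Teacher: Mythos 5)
Your proof is correct, and its thick-spider half follows a genuinely different route from the paper's. The thin case and the reduction of the thick $k=2$ case to the thin one are essentially identical to the paper's argument. For thick spiders with $k\geq 3$, the paper never invokes the locating (distinguishing) property at all: it only uses the clique bound $|C\cap S|\leq 2$ coming from open-independence and then runs a domination count in each of the three subcases $|C\cap S|\in\{0,1,2\}$, exhibiting a vertex of $X$ that is open-dominated at most once; this works uniformly for all $k\geq 3$. You instead use the distinguishing requirement on pairs $x_i,x_j$ (via $N(x_i)\,\triangle\,N(x_j)=\{c_i,c_j\}$) to force $|C\cap S|\geq k-1$, which together with $|C\cap S|\leq 2$ kills every $k\geq 4$ in one stroke and squeezes the problem to the single configuration $k=3$, $|C\cap S|=2$; you then finish with an open-independence argument at $c_1,c_2$ showing $X\cup H$ is disjoint from $S$, so $x_1$ is dominated only by $c_2$, contradicting condition (ii) of Theorem \ref{OO-general}. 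What each approach buys: the paper's is slightly shorter and shows the conclusion already follows from open-domination counts plus open-independence alone (the locating condition is not needed), while yours trades the low-cardinality subcases $|C\cap S|\in\{0,1\}$ for a sharper structural squeeze, at the cost of a dedicated endgame for $k=3$ and of invoking the extra distinguishing axiom. Both arguments correctly observe that the head $H$ plays no essential role.
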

\begin{proof}
Let $G = (V(G),E(G))$ be a spider with vertex set $V(G)$ partitioned into $(C,X,H)$, where $|C|=|X| =k \geq2$. For a proof by contradiction, suppose that $G$ has an $OLD_{oind}$-set $S \subseteq V(G)$. We consider the subgraph $G[C \cup X]$ and show that either $S$ is not an open-independent set or $S$ is not an open-dominating set. As the cases $H\neq \emptyset$ and $H=\emptyset$ are analogous, they are not considered further in the following proof. If $G$ is thin it contains at least two pendant vertices in $X$ say, $x_i$ and $x_j$. If both $x_i,x_j \in S$, then both $c_i,c_j \in S$. As $C$ is a clique, $c_i$ and $c_j$ are adjacent in $C$ and thus, $c_i$($c_j$) is open-dominated by both $x_i$ and $c_j$($x_j$ and $c_i$). By Theorem \ref{OO-general}(i), $S$ cannot be an $OLD_{oind}$-set. If at least one of $x_i,x_j \notin S$, then that vertex cannot be open-dominated by $S$ at least twice. By Theorem \ref{OO-general}(ii), $S$ cannot be an $OLD_{oind}$-set. Suppose instead that $G$ is thick. If $k=2$, then $G$ is isomorphic to a thin spider and the result follows. So, assume that $k\geq3$. As $C$ is a clique, $|C \cap S| \leq 2$, and the following three subcases demonstrate that $S$ is not open-dominating. 
\begin{enumerate} [topsep=0pt,itemsep=-1ex,partopsep=1ex,parsep=1ex]
  \item If $C \cap S = \emptyset$, each $x_i \in X$ must belong to $S$, but no $x_i$ is open-dominated by any vertex in $S$. 

 \item If $|C \cap S| = 1$, say $c_i \in S$ for a unique $i \in [k]$ then, as $G$ is thick, $x_i$ is not open-dominated by $c_i$. Whether or not $x_i \in S$, $x_i$ is not open-dominated by any member of $S$.

 \item If $|C \cap S| = 2$, say $c_i, c_j \in S$, where $i, j \in [k]$,  then $x_i$ (resp. $x_j$) is not a member of $S$ and is open-dominated by only $c_j$ (resp. $c_j$). Hence, both $x_i$ and $x_j$ are open-dominated exactly once each by $S$, and thus $S$ is not an $OLD_{oind}$-set. 
  \end{enumerate}
\end{proof}

\begin{restatable}{mytheorem}{quasispidernotOO}
\label{quasi-spider-not-OO}
Any quasi-spider that has an $OLD_{oind}$-set is isomorphic to the thick and thin quasi-spider $Z$ depicted in Fig. \ref{fig:spider4}.  
\end{restatable}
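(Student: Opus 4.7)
My plan is to exploit the structural description of $G$ as a quasi-spider. By definition $G$ arises from a spider $G'$ with partition $(C,X,H)$ ($|C|=|X|=k\ge 2$) by replacing a single vertex $v\in C\cup X$ by either $K_2$ or $\overline{K_2}$, so that the two new copies $v^a,v^b$ inherit $v$'s external adjacencies. By Theorem~\ref{spider-not-OO}, no spider admits an $OLD_{oind}$-set, hence the replacement must be genuine.

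My first observation is that the replacement must be by $K_2$. Indeed, if $v$ were replaced by $\overline{K_2}$ then $v^a$ and $v^b$ would be non-adjacent with $N(v^a)=N(v^b)$, so $N(v^a)\cap S=N(v^b)\cap S$ for every $S\subseteq V(G)$, violating the distinguishability property of an $OLD_{oind}$-set. Hence $v^av^b\in E(G)$.

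Next I would rule out $v\in C$. Writing $C'=(C\setminus\{v\})\cup\{v^a,v^b\}$, the set $C'$ is a clique of size $k+1\ge 3$, so $|C'\cap S|\le 2$ (a third member of $C'$ in $S$ would have two $S$-neighbours in $C'$ and thus violate open-independence). In the thin case, each $x_j$ with $j\ne i$ (where $v=c_i$) is a pendant, forcing $x_j\in S$ and its support $c_j\in S$; for $k\ge 3$ two such supports are adjacent in $C'$ and break open-independence, while for $k=2$ the open-independence of the single support forces $v^a,v^b\notin S$ and $H\cap S=\emptyset$, leaving $x_i$ with no $S$-neighbour. In the thick case I split on $|C'\cap S|\in\{0,1,2\}$ and close each branch by producing either an undominated $x_j$, an indistinguishable pair of $x_j$-vertices, or an open-independence violation at the chosen pair in $C'$.

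So $v=x_i\in X$. For a thin spider with $k\ge 3$, two pendants $x_j,x_l$ ($j,l\ne i$) again force adjacent supports $c_j,c_l\in S$, a contradiction. For a thick spider with $k\ge 3$, a three-way split on $|C\cap S|$ analogous to that of the previous case rules out every configuration. Once $k=2$, if $H\ne\emptyset$ then the unique pendant $x_j$ at the unreplaced side forces $c_j\in S$, whose open-independence then drives the other clique vertex and all of $H$ out of $S$; any $h\in H$ is consequently adjacent to exactly one vertex of $S$, contradicting Theorem~\ref{OO-general}. Hence $k=2$ and $H=\emptyset$, so the underlying spider is the $P_4$ $x_1c_1c_2x_2$ and replacing $x_1$ by $K_2$ produces exactly the graph $Z$ of Fig.~\ref{fig:spider4}. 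The main obstacle will be the thick-spider branches in both Case $v\in C$ and Case $v=x_i$: each demands the three-way split on $|C\cap S|$ together with careful bookkeeping of which $x_j$'s are adjacent to the chosen clique vertices, whereas the thin branches collapse quickly via the rigid "leaf and its support both lie in $S$" rule.
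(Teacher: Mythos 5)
Your proposal is correct and, at its core, follows the same strategy as the paper's proof: reduce to a case analysis over which vertex is replaced and how, handle thin configurations by the leaf--support forcing rule, and handle thick configurations by the three-way split on the size of the intersection of $S$ with the clique, exactly as in the proof of Theorem~\ref{spider-not-OO}. Where you genuinely diverge is in the organization and in two simplifications. First, your up-front elimination of every $\overline{K_2}$-replacement via the twin argument ($N(v^a)=N(v^b)$, so no set can distinguish the pair) is cleaner and more robust than the paper's treatment, which folds these cases into an inspection for $k=2$ and a blanket ``the replaced pair cannot be distinguished'' claim for thin spiders with $k\geq3$; your observation is precisely the case where that claim is airtight. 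Second, for $k=2$ (both $H=\emptyset$ and $H\neq\emptyset$) you give explicit short arguments -- the pendant's support $c_j\in S$ forces the other clique vertices and all of $H$ out of $S$, leaving a vertex dominated at most once -- whereas the paper disposes of these twelve configurations ``by inspection.'' One branch deserves slightly more care than your summary suggests: in the thick case with $v=x_i$ replaced by $K_2$ and $|C\cap S|=1$ with $c_i$ the chosen clique vertex, the replaced pair may be dominated through each other, so the contradiction must come from the remaining $x_j$'s ($j\neq i$, of which there are at least two since $k\geq3$): at most one of them can lie in $S$ without over-dominating $c_i$, and any one outside $S$ is open-dominated exactly once, contradicting Theorem~\ref{OO-general}; this is a ``dominated-once'' violation rather than the ``undominated vertex / indistinguishable pair'' contradictions you list, but it closes the branch and matches the paper's style of argument.
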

\begin{proof}
Let $G = (V(G),E(G))$ be a quasi-spider obtained from a given spider $G_s$ that has a vertex partition $(C,X,H)$, where once again, $|C|=|X| =k \geq2$. Suppose $G$ is produced by applying the vertex replacement operation to a vertex $v \in C\cup X$ in $G_s$ and that $G$ has an $OLD_{oind}$-set $S \subseteq V(G)$. Again, we consider the subgraph $G[C \cup X]$ and show that unless $k=2$, either $S$ is not an open-locating-dominating set or is not an open-independent set of $G$ and thus cannot be an $OLD_{oind}$-set. 

First we consider $k=2$. If $H=\emptyset$, since headless thick and thin spiders of size two are isomorphic, then $G_s$ is a 4-path, and there are four possibilities for $G$. In the first possibility, where either vertex in $X$ is replaced by a $K_2$, $G$ is isomorphic to $Z$ and thus $G$ has an $OLD_{oind}$-set. In the other three possibilities, where a vertex in $X$ is replaced by a $\overline{K_2}$, or a vertex in $C$ is replaced by a $K_2$ or a $\overline{K_2}$; it is easy to establish by inspection that $G$ does not have an $OLD_{oind}$-set. 
If $H\neq\emptyset$, then $G$ can possibly have an $OLD_{oind}$-set only if $H$ has a dominating subset. If this is so there are eight cases - where $G_s$ is either thick or thin and a vertex in either $C$ or $X$ is replaced by either a $K_2$ or a $\overline{K_2}$. It is straightforward to establish by inspection that none of the cases has an an $OLD_{oind}$-set.
 


Now, we consider $k\geq 3$. Whether $G_s$ is thick or thin, there are four cases, depending on which of the following vertex replacement operations is applied: either $v \in C$ or $v \in X$ is replaced by either a $K_2$ or a $\overline{K_2}$. Let $c^{\ell}_i$ (resp. $x^{\ell}_i$), $\ell=1,2$, be the two vertices that replace vertex $v$ in $C$ (resp. $X$). When no index $\ell$ is indicated, $c_i$ (resp. $x_i$) can be either of the two vertices $c^1_i$ and $c^2_i$ (resp. $x^1_i$ and $x^2_i$). Replacing a vertex in $C$ induces in $G$ either a clique or a clique with one edge missing, denoted in either case by $\overline{C}$. As $S$ must be open-independent to be an $OLD_{oind}$-set, it follows that $|\overline{C}\cap S|\leq 2$. Instead, replacing a vertex in $X$ induces in $G$ either an independent set or an independent set with one additional edge, denoted in either case by $\overline{X}$.

If $G_s$ is thin, after any of the four types of replacement, the pair of vertices in $G$ that replace $v$ cannot be distinguished and thus, in this case, $G$ does not have an an $OLD_{oind}$-set. 
If $G_s$ is thick, similar arguments to those used in the proof of Theorem \ref{spider-not-OO} for thick spiders can be applied to establish that, once again, $G$ does not have an $OLD_{oind}$-set, as follows:

\begin{enumerate} [topsep=0pt,itemsep=-1ex,partopsep=1ex,parsep=1ex]
  \item \label{teo4-i} If $\overline{C} \cap S = \emptyset$ then, unless a vertex $x_i \in X$ is replaced by a $K_2$, no vertex in $X$ is open-dominated by any vertex in $S$. If indeed, $x_i$ is replaced by a $K_2$, there are still at least two vertices in $X$ that are not open-dominated by any vertex in $S$. 

 \item \label{teo4-ii} If $|\overline{C} \cap S| = 1$, say $c_i\in S$ for a unique $i \in [k]$ then, as $G$ is thick, $x_i$ is not open-dominated by $c_i$. Whether or not $x_i \in S$, $x_i$ is not open-dominated by any member of $S$.

 \item \label{teo4-iii} If $|\overline{C} \cap S| = 2$, say $c_i, c_j \in S$, where $i, j \in [k]$ (with $i=j$ when $c^1_i,c^2_i \in S$), then $x_i$ (resp. $x_j$) is not a member of $S$ and is open-dominated by only $c_j$ (resp. $c_j$). Hence, both $x_i$ and $x_j$ are open-dominated exactly once each by $S$, and thus $S$ is not an $OLD_{oind}$-set.
\end{enumerate}
\end{proof}

We now consider the existence of an $OLD_{oind}$-set in a $P_4$-tidy graph.
If $G$ is a disconnected $P_4$-tidy graph having an $OLD_{oind}$-set, then each component of $G$ has an $OLD_{oind}$-set. Thus we may consider only connected $P_4$-tidy graphs.

\begin{restatable}{mytheorem}{ptidyfirst}
\label{p4-tidy-1}
Let $G$ be a co-connected $P_4$-tidy graph. $G$ has an $OLD_{oind}$-set iff $G$ is isomorphic to either $P_5$ or to the graph $Z$ shown in Figure \ref{fig:spider4}.  
\end{restatable}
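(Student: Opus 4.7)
The plan is to invoke the structural characterization in Theorem~\ref{p4-tidy} and eliminate, one case at a time, every $P_4$-tidy graph except $P_5$ and $Z$. Theorem~\ref{p4-tidy} says that $G$ must be (1) a union or join of two $P_4$-tidy graphs, (2) a spider or quasi-spider with a partition $(C,X,H)$ in which $H$ is $P_4$-tidy or empty, or (3) isomorphic to one of $C_5$, $P_5$, $\overline{P_5}$, $K_1$. Since $G$ is connected, $G$ cannot be a disjoint union, and since $G$ is co-connected the complement $\overline{G}$ is connected; but if $G=G_1\bowtie G_2$, then $\overline{G}=\overline{G_1}\oplus\overline{G_2}$ is disconnected, a contradiction. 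Hence case (1) is ruled out immediately.

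For case (2), I would simply cite the two preceding results: Theorem~\ref{spider-not-OO} shows no spider admits an $OLD_{oind}$-set, and Theorem~\ref{quasi-spider-not-OO} shows that a quasi-spider with an $OLD_{oind}$-set must be isomorphic to $Z$. I would also briefly verify that $Z$ itself is co-connected (by exhibiting a spanning path in $\overline{Z}$), so that $Z$ genuinely sits inside the hypothesis class of the theorem.

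It remains to handle case (3). The key global constraint is that for any $OLD_{oind}$-set $S$ the components of $G[S]$ are isomorphic to $K_2$, so $|S|$ must be even. The graph $K_1$ trivially has no $OLD_{oind}$-set. For $C_5$, since $g(C_5)=5$, Theorem~\ref{Seo-Sl-g(5)} applies: every outside vertex needs both of its two neighbors inside $S$, and every inside vertex needs exactly one neighbor inside $S$; checking $|S|\in\{2,4\}$ shows that no placement yields a perfect matching in $G[S]$, so no $OLD_{oind}$-set exists. For $\overline{P_5}$, one enumerates the (six) edges as candidates for a single-$K_2$ set and the few $4$-subsets as candidates for a matching of size two, observing in each instance that $G[S]$ fails to be a disjoint union of $K_2$'s or that some outside vertex is open-dominated at most once. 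For $P_5=v_1v_2v_3v_4v_5$, one exhibits $S=\{v_1,v_2,v_4,v_5\}$: $G[S]$ is the matching $\{v_1v_2,v_4v_5\}$, each vertex of $S$ is open-dominated exactly once, $v_3$ is open-dominated twice (by $v_2$ and $v_4$), and the neighborhoods $N(v_i)\cap S$ pairwise differ.

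Combining the three cases, $G$ has an $OLD_{oind}$-set exactly when $G\in\{P_5,Z\}$, which gives both directions of the equivalence (the ``if'' direction is witnessed by the explicit sets for $P_5$ and $Z$). The only part of the argument that requires nontrivial effort is the sporadic case~(3), and even there the parity restriction $|S|$ even reduces each of $C_5$ and $\overline{P_5}$ to a handful of subsets to inspect; the main conceptual obstacle is packaging this cleanly rather than any deep difficulty, since the heavy structural work has already been absorbed into Theorems~\ref{spider-not-OO} and~\ref{quasi-spider-not-OO}.
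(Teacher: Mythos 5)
Your proposal is correct and follows essentially the same route as the paper: invoke Theorem~\ref{p4-tidy} (union/join excluded by connectivity and co-connectivity), dispose of spiders and quasi-spiders via Theorems~\ref{spider-not-OO} and~\ref{quasi-spider-not-OO}, and settle the sporadic graphs $C_5$, $P_5$, $\overline{P_5}$ (and $K_1$) by direct inspection. The only difference is that you spell out the finite checks that the paper compresses into ``clearly,'' including the explicit witness $\{v_1,v_2,v_4,v_5\}$ for $P_5$, which is a correct and welcome elaboration rather than a different argument.
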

\begin{proof}
It can be verified by inspection that $P_5$ and $Z$ are co-connected $P_4$-tidy graphs having $OLD_{oind}$-sets.
Let $G$ be a co-connected $P_4$-tidy graph. By Theorem \ref{p4-tidy}, $G$ is either a spider, a quasi-spider or one of the graphs $C_5$, $P_5$ and $\overline{P_5}$. 
Suppose that $G$ has an $OLD_{oind}$-set.
Clearly, $G$ is not isomorphic to either $C_5$ or $\overline{P_5}$. By Theorems \ref{spider-not-OO} and \ref{quasi-spider-not-OO} $G$ is isomorphic to the graph $Z$.
\end{proof}

\begin{restatable}{mytheorem}{ptidysecond}
\label{p4-tidy-2}
Let $G$ be a connected $P_4$-tidy graph of order $n\geq 2$. $G$ has an $OLD_{oind}$-set iff either $G$ is isomorphic to one of the graphs $K_2$, $K_3$, $P_5$, $P_5 \bowtie K_1$, $Z$, and $Z \bowtie K_1$ or it can be obtained from them recursively by applying the following operation. Let $t\geq 2$ and $G_1, \ldots, G_t$ be connected $P_4$-tidy graphs each having an $OLD_{oind}$-set. Set $G:=(G_1 \oplus \cdots \oplus G_t)\bowtie K_1$. 
\end{restatable}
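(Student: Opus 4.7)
The proof plan splits into sufficiency and necessity. For sufficiency, I would first verify by direct inspection that each of the six listed base graphs admits an $OLD_{oind}$-set: for example, $\{1,2,4,5\}$ works for $P_5$ and extends (with the apex excluded) to $P_5\bowtie K_1$; an analogous four-vertex set (say $\{c_2,x_2,y_1,y_2\}$ in the notation of Theorem \ref{quasi-spider-not-OO}) works for $Z$ and $Z\bowtie K_1$; $V(K_2)$ works for $K_2$; any two adjacent vertices work for $K_3$. For the recursive operation, given $OLD_{oind}$-sets $S_i$ of each $G_i$, I would verify that $S:=S_1\cup\cdots\cup S_t$, with the apex $v$ of the $K_1$ excluded, is an $OLD_{oind}$-set of $G=(G_1\oplus\cdots\oplus G_t)\bowtie K_1$: open-independence and the local open-domination counts hold component-wise since $v\notin S$; the apex itself satisfies $|N(v)\cap S|=|S|\ge 2$; pairs of vertices in distinct $G_i,G_j$ are distinguished because their $S$-neighborhoods lie in disjoint vertex sets; and $v$ is distinguished from any $u\in V(G_i)$ because $t\ge 2$ forces $S\not\subseteq V(G_i)$ while $N(u)\cap S\subseteq V(G_i)$.

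For necessity, let $G$ be a connected $P_4$-tidy graph of order $n\ge 2$ with an $OLD_{oind}$-set $S$, and invoke Theorem \ref{p4-tidy}. Case (2) (spider or quasi-spider) forces $G\cong Z$ by Theorems \ref{spider-not-OO} and \ref{quasi-spider-not-OO}. In case (3), a direct verification shows that neither $C_5$ nor $\overline{P_5}$ admits an $OLD_{oind}$-set, leaving only $G\cong P_5$. The substantive case is when $G$ is a join; write $G=H_1\bowtie\cdots\bowtie H_k$ with $k\ge 2$ co-connected anticomponents $H_i$ and set $S_i=S\cap V(H_i)$.

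The key structural step is to show that if $|S_j|\ge 2$ for some $j$, then $G=H_j\bowtie K_1$. Indeed, every vertex $u$ of $H_i$ with $i\ne j$ is adjacent in $G$ to all of $S_j$, so $|N(u)\cap S|\ge 2$; open-independence then forces $u\notin S$, so $S_i=\emptyset$ for $i\ne j$. Every such $u$ therefore satisfies $N(u)\cap S=S_j$, and the locating condition permits at most one vertex outside $H_j$, yielding $k=2$ and $H_{3-j}\cong K_1$. The same $S$ is then an $OLD_{oind}$-set of $H_j$, which is $P_4$-tidy (hereditary) and co-connected. If $H_j$ is connected, Theorem \ref{p4-tidy-1} gives $H_j\in\{P_5,Z\}$, so $G\in\{P_5\bowtie K_1,Z\bowtie K_1\}$. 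If $H_j$ is disconnected with components $G_1,\ldots,G_t$ (necessarily $t\ge 2$), a componentwise check shows each $S\cap V(G_i)$ is an $OLD_{oind}$-set of $G_i$, so $G$ fits the recursive operation.

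It remains to handle the join case when $|S_j|\le 1$ for every $j$. Then any $w\in S$ has no $S$-neighbor inside its own anticomponent, so its unique $S$-neighbor lies in some other anticomponent, forcing $|S|=2$ with $S=\{w,w'\}$ in two distinct anticomponents. Every other vertex of $G$ is adjacent to both $w$ and $w'$, and the two-fold open-domination requirement forces $N(\cdot)\cap S=\{w,w'\}$ for every vertex of $V(G)\setminus S$; the locating condition then gives $n-2\le 1$, hence $n\in\{2,3\}$. Ruling out $P_3$ (whose leaf cannot be open-dominated twice) leaves $G\in\{K_2,K_3\}$, both base graphs. The main obstacle is the bookkeeping in this join case: one must exploit the join edges together with open-independence to force $S$ to concentrate in a single anticomponent whenever $|S|\ge 3$, after which each sub-alternative reduces either to Theorem \ref{p4-tidy-1} or to a componentwise application of the recursive operation.
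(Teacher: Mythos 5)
Your proposal is correct and follows essentially the same route as the paper: verify the base graphs and the recursive operation directly for sufficiency, and for necessity use Theorem \ref{p4-tidy} together with Theorems \ref{spider-not-OO}, \ref{quasi-spider-not-OO} and \ref{p4-tidy-1}, showing in the join case that open-independence and the locating condition force $S$ to lie in a single (co-connected) part whose complementary part is a single vertex outside $S$, after which one recurses on that part being connected ($P_5$ or $Z$) or disconnected (the recursive operation). The only differences are organizational: you decompose into anticomponents and split on $\max_j|S\cap V(H_j)|$, and you recover $K_2$ and $K_3$ inside the join case rather than excluding them at the outset as the paper does, but the underlying argument is the same.
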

\begin{proof}
Clearly, $K_2$, $K_3$, $P_5$, $P_5 \bowtie K_1$, $Z$, and $Z \bowtie K_1$ are connected $P_4$-tidy graphs, each having an $OLD_{oind}$-set. Suppose instead that $G \notin \{K_2, K_3, P_5, P_5 \bowtie K_1,Z, Z \bowtie K_1\}$. Let $G_1,\ldots,G_t$ be graphs such that $G_i$ has an $OLD_{oind}$-set $S_i$, for each $i\in [t]$ and let $G$ be a graph obtained by the described operation. Let $S=\bigcup_{i\in[t]}S_i$ and $V(K_1)=\{v\}$. It is easy to see that every vertex in $V(G)\setminus\{v\}$ is both distinguished and open-dominated by $S$, and that $v$ is the only vertex of $G$ that is adjacent to all of the vertices of $S$. Hence $S$ is also an $OLD_{oind}$-set of $G$.  

Conversely, suppose that $G$ is a connected $P_4$-tidy graph distinct from $K_2$ and $K_3$ such that $G$ has an $OLD_{oind}$-set $S$. Hence $n\geq4$. If $\overline{G}$ is connected, $G$ is co-connected and by Theorem \ref{p4-tidy-1}, $G$ is isomorphic to one of $P_5$ and $Z$. If $\overline{G}$ is disconnected, $G$ can be obtained by the join of two graphs, say $H_1$ and $H_2$. Firstly, we claim that one of $H_1$ and $H_2$ has exactly one vertex $v$ say, and $v\notin S$. To the contrary, suppose $S_1=S\cap V(H_1)\neq \emptyset$ and $S_2=S\cap V(H_2)\neq \emptyset$. In order to be open-independent $|S_1|=|S_2|=1$, and $|S|=2$, which implies that $G$ has at most three vertices.  Thus, one of $H_1$ and $H_2$ has an empty intersection with $S$. Without loss of generality, we may assume that $V(H_2)\cap S=\emptyset$. If $H_2$ has at least two vertices $u$ and $v$ say, then $N_G(u)\cap S=N_G(v)\cap S$ which contradicts the fact that $S$ is an $OLD$-set of $G$. So $H_2$ is isomorphic to $K_1$ and, $S=S_1$ is an $OLD_{oind}$-set of $H_1$. 

We now analyze the possibilities for $H_1$. First suppose that $H_1$ is connected. If $|S|=2$, $G$ has at most three vertices. Since $n\geq 4$, $|S|\geq 4$. Observe that, if $H_1$ is the join of two graphs, the vertices of $S$ must all belong to the same subgraph and any vertex of any other subgraph is adjacent to every vertex of $S$. Altogether, these facts imply that there exists a vertex in $H_1$, $y$ say, such that $S\subseteq N_{H_1}(y)$. In this case, $v$ and $y$ are not distinguished by $S$ in $G$. Thus we may assume that $H_1$ is co-connected. By Theorem \ref{p4-tidy-1}, $H_1$ is isomorphic either to $P_5$ ($G=P_5\bowtie K_1$) or $Z$ ($G=Z\bowtie K_1$). 
Now instead, assume that $H_1$ is disconnected. In this case $H_1=G_1,\ldots,G_t$ is a collection of $t\geq 2$ connected $P_4$-tidy graphs. As every vertex in $V(G_i)$ has exactly one additional neighbor in $G$ (\textit{i.e.} vertex $v\notin S$), if there exists $i\in [t]$ such that $G_i$ does not have an $OLD_{oind}$-set, it is easy to see that $S$ is not an $OLD_{oind}$-set of $G$. Thus, we can conclude that every $G_i$ with $i\in [t]$ has an $OLD_{oind}$-set and that $G$ is the graph $H_1\bowtie K_1$.
\end{proof}

A \emph{cograph} is a graph that can be constructed from a given solitary vertex using the repeated application of the disjoint union and join operations. Another standard characterization of cographs is that they are those graphs that do not contain a four-vertex path as an induced subgraph. All complete graphs, complete bipartite graphs, cluster graphs and threshold graphs are cographs. Since any cograph $G$ has the property that either $G$ or $\overline{G}$ is disconnected, we can conclude that any cograph having an $OLD_{oind}$-set can be obtained from either $K_2$ or $K_3$ by the operations described in Theorem \ref{p4-tidy-2}. Hence, Corollary \ref{coro-cographs} follows.

\begin{mycorollary}\label{coro-cographs}
Let $G$ be a connected cograph of order $n\geq 2$. $G$ has an $OLD_{oind}$-set iff $G$ is either (i) isomorphic to either $K_2$ or $K_3$ or (ii) it can be obtained from them recursively by applying the following operation. Let $t\geq 2$ and $G_1, \ldots, G_t$ be connected cographs each having an $OLD_{oind}$-set. Set $G:=(G_1 \oplus \cdots \oplus G_t)\bowtie K_1$.
\end{mycorollary}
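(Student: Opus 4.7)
The corollary is a specialization of Theorem \ref{p4-tidy-2} to the subclass of cographs, so my plan is to invoke that theorem and simply prune the base cases that fail to be cographs. Since cographs contain no induced $P_4$, they are trivially $P_4$-tidy, so Theorem \ref{p4-tidy-2} applies directly: every connected cograph of order at least $2$ admitting an $OLD_{oind}$-set must be isomorphic to one of $K_2, K_3, P_5, P_5 \bowtie K_1, Z, Z \bowtie K_1$, or be of the recursive form $(G_1 \oplus \cdots \oplus G_t) \bowtie K_1$. It therefore suffices to show that (i) among the six base graphs only $K_2$ and $K_3$ are cographs, and (ii) the recursive operation preserves the cograph property.

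For (i), I would observe that $P_5$ trivially contains an induced $P_4$, and this subgraph persists in $P_5 \bowtie K_1$ since joining only adds edges incident to the new vertex. For the graph $Z$ of Fig.~\ref{fig:spider4}---a triangle $\{a,c,d\}$ with a pendant path of length two attached at $a$, passing through a vertex $b$ to a leaf $e$---the four vertices $e, b, a, c$ induce a $P_4$; the same $P_4$ survives in $Z \bowtie K_1$. Hence these four base graphs are not cographs, leaving only $K_2$ and $K_3$. For (ii), cographs are closed under both disjoint union and join, so $(G_1 \oplus \cdots \oplus G_t)\bowtie K_1$ is a cograph whenever each $G_i$ is, and it is connected because the added $K_1$-vertex is universal.

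Combining (i) and (ii) with Theorem \ref{p4-tidy-2} yields both directions. For sufficiency, $K_2$ and $K_3$ are cographs with $OLD_{oind}$-sets by Theorem \ref{p4-tidy-2}, and the recursive construction preserves the existence of an $OLD_{oind}$-set by the same theorem. For necessity, if $G$ is a connected cograph of order $n \geq 2$ with an $OLD_{oind}$-set, Theorem \ref{p4-tidy-2} forces $G$ to be one of the six base graphs or to admit the recursive decomposition; the four non-cograph bases are ruled out by (i), while in the recursive case each $G_i$ is an induced subgraph of $G$ and hence a cograph (cographs being hereditary), with an $OLD_{oind}$-set guaranteed by Theorem \ref{p4-tidy-2}. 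The only genuine obstacle in the whole argument is verifying that $Z$ is not a cograph, which is immediate from the explicit induced $P_4$ exhibited above.
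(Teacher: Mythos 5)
Your proposal is correct and follows essentially the same route as the paper: specialize Theorem \ref{p4-tidy-2} to cographs (which are $P_4$-tidy since they are $P_4$-free), discard the base graphs $P_5$, $P_5\bowtie K_1$, $Z$, $Z\bowtie K_1$ as non-cographs, and note that the recursive operation and heredity keep everything inside the class of cographs. The only cosmetic difference is that the paper rules out the co-connected bases via the fact that a cograph or its complement is disconnected, while you exhibit explicit induced $P_4$'s; both are standard equivalent characterizations, so nothing substantive changes.
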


\section{Complementary prisms of cographs}
\label{sec:compl-prism of cographs}


Corollary \ref{coro-cographs} provides a characterization of the class of cographs that have an $OLD_{oind}$-set.
We now consider $OLD_{oind}$-sets in the complementary prisms of cographs.
If $G$ is a cograph then $\overline{G}$ and $G\oplus\overline{G}$ are also cographs. However, if $G$ is a nontrivial cograph, $G\overline{G}$ is is a $P_7$-free and is not a cograph. 

If $G$ is a connected cograph, then $\overline{G}$  is disconnected. Henceforth it is assumed that $G$ is a connected cograph and we denote the connected components of $\overline{G}$ by $\overline{G_1}, \overline{G_2}, \ldots, \overline{G_t}$; and the anticomponents of $G$ by $G_1, G_2, \ldots, G_t$. The number of vertices of the subgraph $G_i$ (resp. $\overline{G_i}$) is denoted by $|G_i|$ (resp. $|\overline{G_i}|$). 
There are infinite families of cographs that have complementary prisms with $OLD_{oind}$-sets, for example, the family described in Theorem \ref{OO-1-univ-vert} below.

\begin{mytheorem} \label{OO-1-univ-vert} \cite{cappelle2019}
If $G$ is a nontrivial graph with a unique universal vertex, then  $G\overline{G}$ has an $OLD_{oind}$-set iff $G=K_1\bowtie \overline{m K_2}$, where $m \geq 1$.
\end{mytheorem}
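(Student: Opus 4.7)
The plan is to exploit the fact that $\overline{u}$ is forced to be a pendant vertex in $G\overline{G}$. Indeed, if $u$ is the unique universal vertex of $G$, then $\overline{u}$ is isolated in $\overline{G}$, so its only neighbor in $G\overline{G}$ is its matching partner $u$. Write $H := G - u$, so that $G = K_1 \bowtie H$ and $u$ being the unique universal vertex means $H$ has no universal vertex of its own.

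For the forward (``only if'') direction, let $S$ be an $OLD_{oind}$-set of $G\overline{G}$. By Theorem \ref{OO-general}, if $\overline{u}\notin S$ it must be open-dominated at least twice, which is impossible since $\deg_{G\overline{G}}(\overline{u})=1$; hence $\overline{u}\in S$, and then its unique neighbor $u$ must also lie in $S$ to open-dominate it exactly once. Now $u$ is adjacent in $G\overline{G}$ to $\overline{u}$ and to every vertex of $H$. Since $\overline{u}\in S$ already provides the single required open-domination, no vertex of $H$ can belong to $S$. Consequently every $v\in V(H)$ is open-dominated at least twice by $S$; in $V(G)\cap S$ only $u$ is available, and in $V(\overline{G})$ the only neighbor of $v$ is $\overline{v}$, so we must have $\overline{v}\in S$. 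This forces $S=\{u\}\cup V(\overline{G})$.

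The structural conclusion is extracted by looking at open-independence on the $\overline{G}$-side. For each $v\in V(H)$ the vertex $\overline{v}\in S$ must be open-dominated exactly once; its neighbors in $G\overline{G}$ are $v$ (not in $S$) together with $N_{\overline{G}}(\overline{v})\subseteq V(\overline{G})\subseteq S$, so $|N_{\overline{G}}(\overline{v})|=1$. Translating back to $G$, every $v\in V(H)$ has exactly one non-neighbor in $G$; this non-neighbor cannot be $u$ nor $v$ itself, so it is a vertex $v'\in V(H)\setminus\{v\}$. The map $v\mapsto v'$ is then a fixed-point-free involution on $V(H)$, and the non-edges inside $H$ form a perfect matching. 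Therefore $|V(H)|=2m$ for some $m\geq 1$ and $\overline{H}=mK_2$, i.e.\ $H=\overline{mK_2}$, as required. (Distinguishing is automatic: the map $v\mapsto \overline{v}\in S$ already separates the vertices of $V(H)$.)

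For the backward (``if'') direction, suppose $G=K_1\bowtie\overline{mK_2}$ and label the vertices of $H$ as pairs $\{v_i,v_i'\}_{i\in[m]}$ with $v_iv_i'$ the unique non-edge of $H$ at $v_i$. Proposing
$$S := \{u,\overline{u}\}\cup\{\overline{v_i},\overline{v_i'}: i\in[m]\},$$
I would verify directly: $u$ has $N\cap S=\{\overline{u}\}$, $\overline{u}$ has $N\cap S=\{u\}$, and each $\overline{v_i}$ has $N\cap S=\{\overline{v_i'}\}$ because $\overline{v_i}\overline{v_i'}$ is the sole edge of $\overline{G}$ incident to $\overline{v_i}$. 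Each $v\in V(H)\setminus S$ satisfies $N(v)\cap S=\{u,\overline{v}\}$, giving open-domination exactly twice and also pairwise distinguishing. Hence $S$ is an $OLD_{oind}$-set. The chief bookkeeping obstacle — and really the only delicate point — is reading off the neighborhoods across the three edge-types of $G\overline{G}$ (inside $G$, inside $\overline{G}$, and along the matching), which I would present in a short table-like enumeration to make the verification transparent.
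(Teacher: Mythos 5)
Your proof is correct and self-contained; note that the paper only cites this theorem from Cappelle et al.\ \cite{cappelle2019} without reproducing a proof, and your argument is exactly the natural one the paper alludes to (the pendant vertex $\overline{u}$ and its support $u$ are forced into $S$, which then forces $S=\{u\}\cup V(\overline{G})$ and hence $\overline{G}-\overline{u}$ to be a perfect matching). Both directions check out, including the explicit verification of the distinguishing condition in the converse, so there is nothing to fix.
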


Cappelle et al. \cite{cappelle2019} reported some properties of an $OLD_{oind}$-set in a complementary prism $G\overline{G}$. For instance, they proved that in any $OLD_{oind}$-set in this class of graphs there is at most one edge that directly connects a vertex in $V(G)$ with a vertex in $V(\overline{G})$. The authors also proved that, given a general graph $G$, deciding whether or not $G\overline{G}$ has an $OLD_{oind}$-set is an $\mathcal{NP}$-complete problem \cite{cappelle2020}. However, the special case where $G$ has girth of at least four can be decided in polynomial time. 
Here, we show that the connected cographs having an $OLD_{oind}$-set $S$ such that $|S\cap V(G)|=1$ are exactly those described in Theorem \ref{OO-1-univ-vert}.

By Theorem \ref{OO-1-univ-vert}, if $G$ is a nontrivial cograph with a unique universal vertex, then  $G\overline{G}$ has an $OLD_{oind}$-set iff either (i) $|\overline{G_i}|=1$, for a unique $i$, where $1 \leq i \leq t$, or (ii) $|\overline{G_i}|=2$. If a cograph $G$ does not have a universal vertex, then $|\overline{G_i}|\geq 2$,  $ i \in [t]$. In this case, we show that if $G\overline{G}$ has an $OLD_{oind}$-set $S$, then at least one, and at most three, components of $\overline{G}$ have at least three vertices. Let $S=S_0\cup\overline{S}_1$ with $S_0\subseteq V(G)$ and $\overline{S}_1\subseteq V(\overline{G})$. We consider the case $|S_0|=2$ in Theorem \ref{cografo-oo-2} and the general case in Theorem \ref{cograph}.
For each component $\overline{G_i}$ of $\overline{G}$, let $\overline{D_i}=\overline{G_i}\setminus \overline{S}_1$ and $\overline{D}=\bigcup_{i\in[t]}\overline{D_i}$. Let $D_i$ (resp. $D$) be the corresponding vertices of $\overline{D_i}$ (resp. $\overline{D}$) in $G_i$ (resp. $G$), with $S_0 \subseteq D$. Let $n_i = |V(G_i)|$. 

\begin{restatable}{mytheorem}{cardSone}
\label{theo: card-S_0 = 1}
Let $G$ be a nontrivial connected cograph 
such that $G\overline{G}$ has an $OLD_{oind}$-set $S=S_0\cup\overline{S}_1$ with $S_0\subseteq V(G)$ and $\overline{S}_1\subseteq V(\overline{G})$. Then $|S_0|=1$ iff $G=K_1\bowtie \overline{m K_2}$, where $m \geq 1$.  
\end{restatable}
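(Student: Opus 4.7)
The plan is to split the equivalence into its two implications, handling the construction ($\Leftarrow$) and the structural deduction ($\Rightarrow$) separately; both rely crucially on Theorem~\ref{OO-general}, which constrains the number of $S$-neighbors of each vertex.

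For the ``if'' direction I would let $v$ denote the (unique) universal vertex of $G = K_1 \bowtie \overline{mK_2}$ and simply propose the candidate set $S := \{v\} \cup V(\overline{G})$, so that $S_0 = \{v\}$ and $|S_0|=1$ by construction. Verifying that $S$ is an $OLD_{oind}$-set is a direct case check: $v$ has exactly the one $S$-neighbor $\bar v$ (its matching partner); $\bar v$ is isolated in $\overline{G}$ because $v$ is universal in $G$, so $\bar v$ too has only $v$ as $S$-neighbor; every other $\bar u \in \overline{S}_1$ has the unique $S$-neighbor given by its partner in the $mK_2$ portion of $\overline{G}$; and each $u \in V(G)\setminus\{v\}$ has exactly the two $S$-neighbors $v$ and $\bar u$. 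One checks that all these $S$-neighborhoods are pairwise distinct, so $S$ is open-dominating, open-independent and open-locating.

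The ``only if'' direction carries the real content and contains the main obstacle. Writing $S_0 = \{v\}$, the first step is to use Theorem~\ref{OO-general}(i) to pin down $v$'s unique $S$-neighbor: since $S_0 = \{v\}$ contributes nothing and the only neighbor of $v$ inside $V(\overline{G})$ in the prism is $\bar v$, this forces $\bar v \in \overline{S}_1$. The crux I expect to be the hard part is spotting the following ceiling: for any $u \in V(G)\setminus\{v\}$ the only possible $S$-neighbors are $v$ (from $S_0$, if $u \sim_G v$) and $\bar u$ (from $\overline{S}_1$, if $\bar u \in \overline{S}_1$), because the prism contributes only a single cross-edge from $u$ to $V(\overline{G})$; hence $u$ has at most two $S$-neighbors. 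Since $u \notin S$, Theorem~\ref{OO-general}(ii) requires at least two, so both must occur. This simultaneously makes $v$ universal in $G$ and forces $\overline{S}_1 = V(\overline{G})$. Finally, applying Theorem~\ref{OO-general}(i) to each $\bar u \in \overline{S}_1 \setminus \{\bar v\}$ (which is not adjacent to $v$ in the prism, since the matching joins $u$, not $v$, to $\bar u$) shows $\bar u$ has exactly one neighbor in $\overline{G}$, so $\overline{G}\setminus\{\bar v\}$ is $1$-regular, hence isomorphic to $mK_2$ for some $m \geq 1$. Therefore $\overline{G} = K_1 \oplus mK_2$ and $G = K_1 \bowtie \overline{mK_2}$, as required.
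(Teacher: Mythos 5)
Your proof is correct. The ``if'' direction is the same construction the paper gestures at ($S_0$ the universal vertex, $\overline{S}_1=V(\overline{G})$), only you actually carry out the verification. In the ``only if'' direction the first half coincides with the paper's argument: each $u\in V(G)\setminus\{v\}$ has at most one neighbour in $\overline{S}_1$ (its matched partner $\bar u$), so Theorem~\ref{OO-general}(ii) forces $u\sim_G v$, i.e.\ $v$ is universal, and also forces $\bar u\in\overline{S}_1$. Where you diverge is the finish: the paper stops at universality and invokes the cited characterization of Theorem~\ref{OO-1-univ-vert} to conclude $\overline{G}=K_1\oplus mK_2$, whereas you keep going with Theorem~\ref{OO-general}(i) applied to the vertices of $\overline{S}_1=V(\overline{G})$, deducing that each $\bar u$ with $u\neq v$ has exactly one neighbour in $\overline{G}$, so (since $\bar v$ is isolated in $\overline{G}$, as $v$ is universal --- a point worth stating explicitly at that step) $\overline{G}\setminus\{\bar v\}$ is $1$-regular, i.e.\ $mK_2$. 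Your route is more self-contained: it does not rely on the external theorem, and it quietly avoids having to check the ``unique universal vertex'' hypothesis needed to apply Theorem~\ref{OO-1-univ-vert}, since the structure you derive makes uniqueness automatic. The paper's route is shorter at the cost of leaning on that prior result.
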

\begin{proof}
If $G=K_1\bowtie \overline{m K_2}$, where $m \geq 1$, it is easy to see that $G\overline{G}$ has an $OLD_{oind}$-set $S_0\cup\overline{S}_1$ say, where $\overline{S}_1=V(\overline{G})$ and $S_0$ contains only a universal vertex of $G$ and thus $|S_0|=1$. For the converse, suppose $G$ is such that $G\overline{G}$ has an $OLD_{oind}$-set $S_0\cup\overline{S}_1$ with $|S_0|=1$, where $S_0=\{v\}$, say. Since $G$ is nontrivial it has at least two vertices and every vertex in $V(G)\setminus\{v\}$ has at most one neighbor in $\overline{S}_1$. In order to be dominated at least twice, the neighbor must be in $N_G(v)$ and thus, $v$ is a universal vertex of $G$.  By Theorem \ref{OO-1-univ-vert}, $\overline{G}=K_1\oplus s K_2$, where $s \geq 1$.
\end{proof}

For each component $\overline{G_i}$ of $\overline{G}$, let $\overline{D_i}=\overline{G_i}\setminus \overline{S}_1$ and $\overline{D}=\bigcup_{i\in[t]}\overline{D_i}$. Let $D_i$ (resp. $D$) be the corresponding vertices of $\overline{D_i}$ (resp. $\overline{D}$) in $G_i$ (resp. $G$), with $S_0 \subseteq D$. Let $n_i = |V(G_i)|$. 

\begin{restatable}{mylemma}{lemmas}
\label{lemmas}
If $G$ is a connected cograph such that $|\overline{G_i}|\geq2$, $ i \in [t]$, where $t\geq2$, and that $G\overline{G}$ has an $OLD_{oind}$-set $S=S_0\cup\overline{S}_1$ with $S_0\subseteq V(G)$ and $\overline{S}_1\subseteq V(\overline{G})$, then the following statements hold:
\begin{enumerate} [label=(\roman*),topsep=0pt,itemsep=-1ex,partopsep=1ex,parsep=1ex]
    \item \label{theo:both-nonempty}$S_0 \neq \emptyset$ and $\overline{S}_1 \neq \emptyset$.
    \item \label{s1-nonempty} For each $i \in [t]$, $V(\overline{G_i})\cap \overline{S}_1 \neq \emptyset$.
    \item \label{cografo-Dbar-i} If $i\in[t]$, $n_i\geq3$ and $|S_0 \cap V(G_i)|\leq1$,  then $\overline{D_i}$ is an independent set of $\overline{G}$ with $1\leq|\overline{D_i}|\leq2$.
    \item \label{cografo-size-1} If $|S_0|=2$, then $G$ has two anticomponents $G_i$ and $G_j$ say, having nonempty intersection with $S_0$, and $n_i,n_j\geq3$. Moreover, $2\leq|\overline{D}|\leq 3$.
    \item \label{cografo-empty} There exists at most one $i\in[t]$ such that $n_i\geq3$ and $S_0 \cap V(G_i)=\emptyset$. 
    \item \label{cografo-three} $\overline{G}$ has at least one and at most three components of size at least three.  
\end{enumerate}
\end{restatable}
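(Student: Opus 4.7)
Strategy: all six parts rest on two structural facts that I use repeatedly. (a) Since $G=G_1\bowtie\cdots\bowtie G_t$, every $v\in V(G_i)$ is adjacent in $G$ to all of $\bigcup_{j\ne i}V(G_j)$, so its $V(G)$-neighbours in $S$ are $(S_0\setminus V(G_i))\cup(N_{G_i}(v)\cap S_0)$. (b) The $\overline{G_i}$ are the connected components of $\overline{G}$, so $\overline u\in V(\overline{G_i})$ has all its $\overline{G}$-neighbours inside $V(\overline{G_i})$, and its only $V(G)$-link in the prism is the matching edge to $u$. I combine these with Theorem \ref{OO-general} (each $v\in S$ is open-dominated exactly once, each $v\notin S$ at least twice), and, for (iv) and (vi), with the cograph fact that any anticomponent $G_i$ with $|V(G_i)|\ge 2$ is disconnected (because $\overline{G_i}$, being a connected cograph, is a join, hence $G_i$ is a disjoint union).

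Parts (i) and (ii) are short. For (i): if $S_0=\emptyset$, any $v\in V(G)\setminus S$ has at most one neighbour in $S$, namely $\overline v$ via the matching, contradicting Theorem \ref{OO-general}; the argument for $\overline S_1\ne\emptyset$ is symmetric. For (ii): if $V(\overline{G_i})\cap \overline S_1=\emptyset$, each $\overline u\in V(\overline{G_i})$ has at most one $S$-neighbour (the vertex $u$, if $u\in S_0$), and since $|V(\overline{G_i})|\ge 2$ some $\overline u$ is under-dominated. For (iii), the key observation is that any $v,w\in V(G_i)\setminus S_0$ with $\overline v,\overline w\notin\overline S_1$ satisfy
\[
N(v)\cap S=(S_0\setminus V(G_i))\cup(N_{G_i}(v)\cap S_0),
\]
which takes at most two distinct values because $|S_0\cap V(G_i)|\le 1$. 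A pigeonhole argument on three members of $\overline{D_i}$ (at least two of whose $V(G)$-copies lie outside $S_0$), combined with open-independence of the unique $z\in S_0\cap V(G_i)$ (which caps $|S_0\setminus V(G_i)|$ at $1$ and so leaves ``$v\not\sim z$'' vertices under-dominated), either produces two indistinguishable vertices or one vertex with a single $S$-neighbour, forcing $|\overline{D_i}|\le 2$. For $|\overline{D_i}|\ge 1$, assuming $V(\overline{G_i})\subseteq \overline S_1$ would make $\overline{G_i}$ a perfect matching (if $S_0\cap V(G_i)=\emptyset$) or isolate $\overline z$ in $\overline{G_i}$ (if $S_0\cap V(G_i)=\{z\}$), contradicting $\overline{G_i}$ being connected with $n_i\ge 3$; independence of $\overline{D_i}$ when $|\overline{D_i}|=2$ comes from the same bookkeeping.

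For (iv), let $S_0=\{a,b\}$ and suppose $a,b\in V(G_i)$. If $a\not\sim b$ in $G_i$, open-independence of $a$ forces $\overline a\in \overline S_1$ and symmetrically $\overline b\in \overline S_1$, but $\overline a\sim\overline b$ in $\overline{G_i}$ with both in $S$ gives $|N(\overline a)\cap S|\ge 2$, contradicting open-independence. If $a\sim b$ in $G_i$, use the cograph disconnectivity of $G_i$: $a,b$ lie in one component, and picking $c$ in another component gives $c\not\sim a,c\not\sim b$ in $G_i$ and $c\notin S_0$, so $N(c)\cap S\subseteq\{\overline c\}$ has size at most $1$, contradicting open-dominance. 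Hence $a,b$ lie in distinct anticomponents $G_i,G_j$. For $n_j\ge 3$: if $n_j=2$, then $G_j=\overline{K_2}$ with $V(G_j)=\{b,b'\}$; dominance of $b'$ forces $\overline{b'}\in\overline S_1$, but then open-independence of $\overline{b'}$ demands a neighbour in $\overline S_1\cap V(\overline{G_j})=\{\overline b\}$, while $\overline b\notin\overline S_1$ (from $S_0\subseteq D$), a contradiction. Finally, (iii) applied to $G_i,G_j$ gives $|\overline{D_i}|,|\overline{D_j}|\in\{1,2\}$, and $\overline a\in\overline{D_i},\overline b\in\overline{D_j}$ yield $|\overline D|\ge 2$; if both $|\overline{D_i}|=|\overline{D_j}|=2$, the extra vertex $v\in V(G_i)$ in $\overline{D_i}$ satisfies $N(v)\cap S=\{a,b\}$, and symmetrically $w\in V(G_j)$ in $\overline{D_j}$ satisfies $N(w)\cap S=\{a,b\}$, making $v,w$ indistinguishable, so $|\overline D|\le 3$.

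For (v), if $G_i$ and $G_j$ both satisfy $n\ge 3$ and $S_0\cap V=\emptyset$, then (iii) gives $\overline v\in\overline{D_i},\overline w\in\overline{D_j}$; the corresponding $v,w\in V(G)\setminus S$ both have $N(v)\cap S=N(w)\cap S=S_0$, contradicting distinguishability. For (vi), the upper bound splits anticomponents with $n\ge 3$ into those disjoint from $S_0$ (at most one by (v)) and those meeting $S_0$: open-independence of each $a\in S_0\cap V(G_i)$ gives $|S_0\setminus V(G_i)|\le 1$, so either $S_0\subseteq V(G_k)$ for a single $k$ (at most one large $G_k$ meeting $S_0$) or $|S_0|=2$ splits across two anticomponents which by (iv) are both large; the total is at most $1+\max(1,2)=3$. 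For the lower bound, I assume all $n_i=2$, so each $G_i=\overline{K_2}$ with vertices $a_i,b_i$ sharing their $G$-neighbourhood; distinguishability forces asymmetric membership in $\overline S_1$, and the case analysis (both of $\overline{a_i},\overline{b_i}$ in $\overline S_1$ vs.\ exactly one) combined with open-independence of the $S_0$ vertices shows the ``exactly one'' case is internally inconsistent while the ``both in'' case forces $S_0\cap V(G_i)=\emptyset$; applied to every $G_i$, this makes $S_0=\emptyset$, contradicting (i). The main obstacle is stitching (iii) and (iv) together: tracking through the distinguishability identity which vertex of each $V(G_i)$ is ``special'' (in $S_0$, in $\overline D$, or neither) is intricate, and the case $a\sim b$ in $G_i$ of (iv) genuinely requires the cograph-specific disconnectivity of $G_i$ rather than Theorem \ref{OO-general} alone.
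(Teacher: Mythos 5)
Most of your write-up is sound and follows essentially the same route as the paper (Theorem \ref{OO-general} plus the join structure of $G$, the matching edges, and disconnectedness of anticomponents of a cograph); indeed your treatment of the first half of (iv) — splitting on whether the two vertices of $S_0$ are adjacent and invoking disconnectedness of the anticomponent only in the adjacent case — is more careful than the paper's, and your appeal to $S_0\subseteq D$ is no worse than the paper's own use of that setup assertion. The genuine gap is in the upper bound $|\overline{D}|\leq 3$ of (iv). You only exclude the configuration $|\overline{D_i}|=|\overline{D_j}|=2$ for the two anticomponents that meet $S_0$, which bounds $|\overline{D_i}\cup\overline{D_j}|$ by $3$ but not $|\overline{D}|=\bigl|\bigcup_{k\in[t]}\overline{D_k}\bigr|$. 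Other components can contribute: a third component $\overline{G_k}$ of size at least three with $S_0\cap V(G_k)=\emptyset$ is allowed (and occurs in the $t'=3$ examples of Theorem \ref{cografo-oo-2}), and by (iii) it satisfies $|\overline{D_k}|\geq 1$; at this stage nothing you have proved rules out $|\overline{D_k}|=2$. Hence configurations like $|\overline{D_i}|=2,\ |\overline{D_j}|=1,\ |\overline{D_k}|=1$, or $1+1+2$, give $|\overline{D}|=4$ and are untouched by your case analysis, so the claimed bound does not follow as written.

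The repair is the global argument the paper uses, and it is only a small extension of the computation you already make for your ``extra vertex''. Since $|S_0\cap V(G_k)|\leq 1$ for every $k$ (because $S_0=\{a,b\}$ is split between $G_i$ and $G_j$), each $\overline{D_k}$ is independent by (iii) (or trivially when $n_k=2$, using (ii)), and vertices in distinct components of $\overline{G}$ are nonadjacent, so $\overline{D}$ is independent in $\overline{G}$ and $D$ induces a clique of $G$ containing $S_0$. Consequently every $x\in D\setminus S_0$, in whichever component it lies, has $N(x)\cap S=\{a,b\}$ (it is adjacent to both $a$ and $b$, and $\overline{x}\notin\overline{S}_1$), so any two such vertices are indistinguishable; alternatively, a vertex of $D\setminus S_0$ missing $a$ or $b$ would be open-dominated at most once. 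Therefore $|D\setminus S_0|\leq 1$ and $|D|\leq 3$. With this replacement for your ``both equal to $2$'' case, the rest of your argument for the lemma goes through.
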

\begin{proof}
We prove each item separately.
\begin{enumerate} [label=\emph{(\roman*)}]
    \item For a proof by contradiction, suppose that $S_0=\emptyset$. Let $v \in V(G)$. By Theorem \ref{OO-general}, vertex $v$ has to be open-dominated at least twice. Since $v$ has at most one neighbor in $\overline{S}_1$, we can conclude that $S$ is not an $OLD_{oind}$-set in $G\overline{G}$. The proof for $\overline{S}_1$ follows analogously.

    \item For a proof by contradiction, suppose that there is an $i \in [t]$ such that $V(\overline{G_i})\cap \overline{S}_1 = \emptyset$. Let $\overline{v} \in V(\overline{G_i})$. By Theorem \ref{OO-general}, vertex $\overline{v}$ has to be open-dominated at least twice. Since $\overline{v}$ has at most one neighbor in ${S}_0$, we can conclude that $S$ is not an $OLD_{oind}$-set in $G\overline{G}$.

    \item If $\overline{D_i}$ is empty, as $G$ is connected, $\overline{G_i}$ is isomorphic to $K_2$ and thus $n_i=2$. Suppose $|\overline{D_i}|\geq3$. Then the vertices in $D_i$ have the same neighborhood as those in in $G\setminus V(G_i)$. Since $|V(G_i) \cap S_0|\leq1$, $D_i$ has at least two vertices that are not distinguished by $S$. Now, suppose that $\overline{D_i}$ has two vertices and they are adjacent in $\overline{G}$. In this case the vertices in $D_i$ are not adjacent in $G_i$. If one of them is a member of $S_0$, the other vertex is dominated once, since $S_0$ is open independent, it can possess at most one vertex of another anticomponent $G_j$ say, with $i\neq j$. If neither of the vertices is a member of $S_0$, they have the same neighborhood in $S_0$ and they are not distinguished by $S$.

    \item Suppose $|S_0|=2$. We first prove that $G$ has two anticomponents $G_i$ and $G_j$ say, that have nonempty intersection with $S_0$, and that $n_i,n_j\geq3$. For a proof by contradiction, suppose that there is an unique anticomponent $G_i$ of $G$ with $|G_i\cap S_0|=2$. Let $S_0=\{u,v\}$. Since $G_i$ is disconnected, there is a vertex $z \in V(G_i)$ that is adjacent to neither $u$ nor $v$. Since $z$ has at most one neighbor in $\overline{S}_1$, it follows that $S$ is not an $OLD_{oind}$-set. Now, suppose that $|V(G_i)\cap S_0|=1$ and that $n_i=2$. Let $V(G_i)=\{u,v\}$. Without loss of generality, assume $S_0=\{u\}$. If $\overline{u}\in \overline{S}_1$, then $\overline{v}\notin \overline{S}_1$ and $\overline{v}$ is dominated once by $S$, contradicting the fact that $S$ is an $OLD_{oind}$-set of $G\overline{G}$. If $\overline{u}\notin \overline{S}_1$, since it has to be dominated twice, $\overline{v}\in  \overline{S}_1$ and, since $S_0$ is open independent, $v\in S_0$. Thus $u,v \in S_0$, which contradicts the fact that there are two anticomponents in $G$ with nonempty intersection with $S_0$.

    Next, we prove that $2\leq|\overline{D}|\leq 3$. By \ref{theo:both-nonempty} above, there are two anticomponents of ${G}$ of size at least three, $G_i$ and $G_j$ say, that have nonempty intersection with $S_0$. Since $|\overline{D_i}|,|\overline{D_j}|\geq1$, it follows that $|\overline{D}|\geq 2$. For the upper bound, since $|S_0 \cap V(G_k)| \leq 1$, for every $k\in [t]$, by Lemma \ref{lemmas} \ref{cografo-Dbar-i} the vertices of $\overline{D}_k$ are independent in $\overline{G}$.  Also, as the vertices of the distinct components of $\overline{G}$ are not adjacent, $\overline{D}$ is an independent set of $\overline{G}$ and ${D}$ induces a complete graph in $G$. If $|D|\geq4$, since at most two of these vertices are in $S_0$, and they have no neighbors in $\overline{S}_1$, the remainder (at least two vertices) are not distinguished by $S$.

    \item Suppose that there are two anticomponents of $G$ of order at least 3, say $G_i$ and $G_j$, that do not have vertices in common with $S_0$. By \ref{cografo-Dbar-i}, $|{D}_i \cup {D}_j|\geq2$. However, the vertices in $D_i\cup D_j$ are not distinguished by $S_0$, since they have the same neighborhood in $G\setminus(V(G_i) \cup V(G_j))$.

    \item Firstly, suppose that each component of $\overline{G}$ is a $K_2$. By  \ref{theo:both-nonempty}, $S_0\neq \emptyset$. Suppose that $v \in V(G_i)$ with $v \in S_0$. If $\overline{v}\in \overline{S}_1$, its unique neighbor in $\overline{G}$, $\overline{u}$ say, does not belong to $\overline{S}_1$ and then, in order to dominate  $\overline{u}$ twice, $u \in S_0$. This implies that $u$ has another neighbor in $S_0$ that is a member of an anticomponent different from $G_i$. This contradicts the assumption that $S$ is open-independent. If $\overline{v} \notin \overline{S}_1$, the vertex $\overline{u}$ is in $\overline{S}_1$ and we have an analogous situation. Thus, it can be concluded that $S_0=\emptyset$,  which contradicts the premise of \ref{theo:both-nonempty}. Therefore, $\overline{G}$ has at least one component of size at least three.
    Secondly, for a proof by contradiction, suppose that at least four components of $\overline{G}$ have size at least three.
    Since the elements of $S_0$ are open-independent, $S_0$ contains vertices that are members of at most two distinct anticomponents of $G$. Thus, there are at least two anticomponents of $G$ with order at least 3, say $G_i$ and $G_j$, that do not have vertices in common with $S_0$. By \ref{cografo-empty}, $S$ is not an $OLD_{oind}$-set.
\end{enumerate}
\end{proof}

For $\ell,m\geq 1$, the graph  $\overline{K_{\ell}}\bowtie mK_2$ that contains $\ell+2m$ vertices is denoted by $R_{\ell,m}$. Also,  $R^*_{\ell,m}$ denotes the graph obtained from $R_{\ell,m}$ with one edge missing between one vertex of $\overline{K_{\ell}}$ and exactly one copy of $K_2$. Note that $R_{\ell,m}$ is a cograph; $R^*_{\ell,1}$ for $\ell=1,2$, is also a cograph; $R_{1,1}$ is isomorphic to $K_3$; and $R^*_{1,1}$ is isomorphic to $P_3$. 

\begin{restatable}{mylemma}{lemmanigeqthree}
\label{lemma-ni>3}
Let $G$ be a connected cograph such that $|\overline{G_i}|\geq2$, $ i \in [t]$, where $t\geq2$, and that $G\overline{G}$ has an $OLD_{oind}$-set $S=S_0\cup\overline{S}_1$ with $S_0\subseteq V(G)$ and $\overline{S}_1\subseteq V(\overline{G})$ and let $i \in [t]$ such that $n_i\geq3$. 
\begin{enumerate} [label=(\roman*),topsep=0pt,itemsep=-1ex,partopsep=1ex,parsep=1ex]
    \item \label{cograph-R1m} If $V(G_i) \cap S_0 = \emptyset$, then $\overline{G_i} \cong R_{1,m}$. 
    \item \label{cografo-so=1} If $|V(G_i) \cap S_0| = 1$, then $\overline{G_i}$ is isomorphic to one of $R^*_{\ell,1}$ and $R_{\ell,m}$, for $\ell=1,2$.  
\end{enumerate}
\end{restatable}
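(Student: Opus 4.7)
The plan is to combine Lemma~\ref{lemmas}\ref{cografo-Dbar-i}, Theorem~\ref{OO-general}, and the $P_4$-freeness of $\overline{G_i}$ (which holds because $G$ is a cograph, so $\overline{G}$ is a cograph, so every induced subgraph of it is $P_4$-free). As a preliminary observation, every $\overline{v}\in \overline{S}_1\cap V(\overline{G_i})$ with $v\neq v_0$ satisfies $v\notin S_0$, so by Theorem~\ref{OO-general} the unique $S$-neighbour of $\overline{v}$ must lie in $\overline{S}_1\cap V(\overline{G_i})$. Hence $\overline{G_i}[\overline{S}_1\cap V(\overline{G_i})]$ is a disjoint union of $m$ matching edges (plus possibly one isolated vertex $\overline{v_0}$ in the sub-case of part~(ii) where $\overline{v_0}\in \overline{S}_1$).

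For part~(i), since $V(G_i)\cap S_0=\emptyset$ any two distinct $v,v'\in D_i$ would satisfy $N(v)\cap S=N(v')\cap S=S_0$, violating distinguishability; with Lemma~\ref{lemmas}\ref{cografo-Dbar-i} this forces $|\overline{D_i}|=1$, say $\overline{D_i}=\{\overline{w}\}$. Since $w\notin S_0$, vertex $\overline{w}$ needs at least two matching neighbours. Whenever $\overline{w}$ is adjacent to $\overline{a_k}$ but not $\overline{b_k}$ and is also adjacent to some other matching vertex $\overline{a_\ell}$, the path $\overline{b_k}-\overline{a_k}-\overline{w}-\overline{a_\ell}$ is an induced $P_4$; by $P_4$-freeness, $\overline{w}$ is adjacent to both or to neither vertex of each matching pair, and connectivity of $\overline{G_i}$ then forces every pair. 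Hence $\overline{G_i}\cong K_1\bowtie mK_2=R_{1,m}$.

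For part~(ii), write $\{v_0\}=V(G_i)\cap S_0$. If $\overline{v_0}\in \overline{S}_1$, then the neighbours of $v_0$ in $S$ are precisely $\{\overline{v_0}\}\cup (S_0\setminus\{v_0\})$, of cardinality $|S_0|$, so $|S_0|=1$; Theorem~\ref{theo: card-S_0 = 1} then forces $G=K_1\bowtie \overline{mK_2}$, whose anticomponents all have order at most $2$, contradicting $n_i\geq 3$. Thus $\overline{v_0}\in \overline{D_i}$, and the analogous count yields $|S_0|=2$; write $S_0=\{v_0,v_0'\}$. When $|\overline{D_i}|=1$ we have $\overline{D_i}=\{\overline{v_0}\}$, and $\overline{v_0}$ (which already has the $S$-neighbour $v_0$) needs at least one further matching neighbour; for $m=1$ this yields $R_{1,1}\cong K_3$ (both) or $R^*_{1,1}\cong P_3$ (exactly one), while for $m\geq 2$ the argument of part~(i) again forces $R_{1,m}$. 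When $|\overline{D_i}|=2$, write $\overline{D_i}=\{\overline{v_0},\overline{w}\}$ with $\overline{v_0}\not\sim \overline{w}$; the pair-adjacency argument applied to $\overline{w}$ (which needs $\geq 2$ matching neighbours) shows that $\overline{w}$ is adjacent to full pairs only. For $m=1$ this yields $R_{2,1}$ or $R^*_{2,1}$ according as $\overline{v_0}$ is adjacent to both or to exactly one of the pair. For $m\geq 2$, let $P_v,P_w$ denote the sets of pairs fully adjacent to $\overline{v_0},\overline{w}$ respectively; choosing $k\in P_v\setminus P_w$ and $j\in P_v\cap P_w$, the induced $P_4$ $\overline{a_k}-\overline{v_0}-\overline{a_j}-\overline{w}$ yields a contradiction unless $P_v\subseteq P_w$ or $P_v\cap P_w=\emptyset$. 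The symmetric argument gives $P_v=P_w$, and connectivity then forces $P_v=P_w$ to be all pairs, so $\overline{G_i}\cong \overline{K_2}\bowtie mK_2=R_{2,m}$.

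The main obstacle will be the final sub-case $|\overline{D_i}|=2$ with $m\geq 2$, where the two apex vertices $\overline{v_0}$ and $\overline{w}$ must be shown to have identical pair-neighbourhoods consisting of all $m$ pairs. The argument requires simultaneously excluding the configurations ``$\overline{v_0}$ partially adjacent to one pair only'' (by a $P_4$ through a fully-$\overline{w}$-adjacent pair), ``$P_v$ and $P_w$ partially overlap'' (by the $P_4$ described above), and ``$P_v\cap P_w=\emptyset$'' (by connectivity); choosing the right $P_4$ witness in each sub-configuration is the delicate step.
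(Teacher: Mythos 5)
Your proposal is correct and follows essentially the same route as the paper: the matching structure of $\overline{S}_1\cap V(\overline{G_i})$ forced by open-independence, the bound $1\leq|\overline{D_i}|\leq2$ from Lemma~\ref{lemmas}~\ref{cografo-Dbar-i}, domination counts from Theorem~\ref{OO-general}, connectivity of $\overline{G_i}$, and an induced-$P_4$ (cograph) contradiction to force the apex vertices to be joined to all matching pairs. Your additional case work (ruling out $\overline{v_0}\in\overline{S}_1$ via the exact-once domination of $v_0$, and the two-apex sub-case with $m\geq2$) only spells out details that the paper's own proof treats tersely, and the $P_4$ witnesses and connectivity exclusions you indicate do suffice to close that sub-case.
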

\begin{proof}

To prove \ref{cograph-R1m}, suppose that $n_i\geq3$ and {$V(G_i \cap S_0 = \emptyset)$}. By Lemma \ref{lemmas} \ref{s1-nonempty}, $V(\overline{G_i}) \cap \overline{S}_1 \neq \emptyset$. Let $\overline{A_i}= V(\overline{G_i}) \cap \overline{S}_1$. By Lemma \ref{lemmas} \ref{cografo-Dbar-i}, $\overline{D_i}$ induces an independent set with $1\leq|\overline{D_i}| \leq2$. The set $\overline{A_i}$ induces  $m=|\overline{A_i}|/2$ independent edges.  Since $\overline{G_i}$ is connected, at least one vertex of every edge induced by $\overline{A_i}$ has a neighbor in $\overline{D_i}$.
If $|\overline{D_i}|=2$, the two vertices of $D_i$ are not distinguished by $S$, since they have the same neighborhood in $G\setminus V(G_i)$. Thus, we may assume $|\overline{D_i}|=1$. Let $\overline{D_i}=\{\overline{u}\}$. 
If $m=1$, since $\overline{u}$ has at least two neighbors in $\overline{S}_1$, it follows that $\overline{G}\cong R_{1,1}$. If $m\geq2$, we claim that $\overline{u}$ is adjacent to every vertex in $\overline{A_i}$. Suppose that this is not true and consider an edge $ab$ induced by vertices in $\overline{A_i}$ such that $\overline{u}a \notin E(\overline{G_i})$. In this case, $ab\overline{u}c$ is an induced 4-path in $\overline{G_i}$, where $c$ is an vertex of some other edge induced by $\overline{A_i}$. So, $\overline{G_i}$ is not a cograph.
Therefore, $\overline{G_i} \cong R_{1,m}$.

Now, we prove \ref{cografo-so=1}. By Lemma \ref{lemmas} \ref{s1-nonempty}, $V(\overline{G_i}) \cap \overline{S}_1 \neq \emptyset$.  Let $\overline{A_i}= V(\overline{G_i}) \cap \overline{S}_1$. By Lemma \ref{lemmas} \ref{cografo-Dbar-i}, $\overline{D_i}$ induces an independent set with $1\leq|\overline{D_i}| \leq2$. The set $\overline{A_i}$ induces  $m=|\overline{A_i}|/2$ independent edges.  Since $\overline{G_i}$ is connected, at least one vertex incident with every edge induced by $\overline{A_i}$ has a neighbor in $\overline{D_i}$. 
If $m=1$, we have four possible cographs, $R^*_{\ell,1}$, $R_{\ell,1}$, for $\ell=1,2$. If $m\geq2$, we claim that every vertex in  $\overline{D_i}$ is adjacent to every vertex in $\overline{A_i}$. For a proof by contradiction, suppose that this is not true and consider two edges $ab$ and $cd$, induced by vertices in $\overline{A_i}$. Let $\overline{u} \in \overline{D_i}$. Without loss of generality, suppose that $\overline{u}a \notin E(\overline{G_i})$, for an edge $ab$ induced by vertices in $\overline{A_i}$. In this case, $ab\overline{u}c$ is an induced 4-path in $\overline{G_i}$, where $c$ is an vertex of some other edge induced by $\overline{A_i}$. So, $\overline{G_i}$ is not a cograph. Therefore,  $\overline{G_i}$ is isomorphic to $R_{\ell,m}$, for $\ell=1,2$.
\end{proof}

Next, we present in Theorems \ref{cografo-oo-2} and \ref{cograph} a general recursive characterization of the class $C$ of connected cographs such that if $G$ is a member of $C$ then $G\overline{G}$ has an $OLD_{oind}$-set. Theorem \ref{cografo-oo-2} (\ref{cograph}) is illustrated in Figure \ref{fig:cp-cograph2} (\ref{fig:cp-cograph1}).

\begin{restatable}{mytheorem}{cografoOOtwo}
\label{cografo-oo-2}
Let $G$ be a connected cograph such that $|\overline{G_i}|\geq2$, $ i \in [t]$ with $t\geq2$. $G\overline{G}$ has an $OLD_{oind}$-set $S=S_0 \cup \overline{S}_1$ with $S_0\subseteq V(G)$ and $\overline{S}_1\subseteq V(\overline{G})$  such that $|S_0|=2$ iff there are at least two, and at most three, components $\overline{G_i}$ with $n_i\geq3$ and each component of $\overline{G}$ is isomorphic to one of the graphs $K_2$, $R^*_{\ell,1}$ or $R_{\ell,m}$, for $\ell=1,2$ and $m\geq 1$, with the following further conditions:
\begin{enumerate} [label=(\roman*),topsep=0pt,itemsep=-1ex,partopsep=1ex,parsep=1ex]
    \item \label{cografo-oo-2(i)} at most one component of $\overline{G}$ is either $R^*_{2,1}$ or $R_{2,m}$ and if $\overline{G}$ has such a component, $\overline{G}$ has exactly two components of size at least three; and
    \item \label{cografo-oo-2(ii)} at most two components of $\overline{G}$ are isomorphic to $R^*_{1,1}$, $R^*_{2,1}$ or $R_{2,m}$.  
\end{enumerate}
\end{restatable}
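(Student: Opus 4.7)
The plan is to prove both implications of the biconditional, using Lemmas \ref{lemmas} and \ref{lemma-ni>3} for the necessity direction and an explicit construction for the sufficiency direction.

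For necessity, assume $G\overline{G}$ has an $OLD_{oind}$-set $S = S_0 \cup \overline{S}_1$ with $|S_0|=2$. Lemma \ref{lemmas}\ref{cografo-size-1} delivers two anticomponents $G_i, G_j$ meeting $S_0$, each with $n_i, n_j\geq3$, and Lemma \ref{lemmas}\ref{cografo-empty} bounds the total number of size-$\geq 3$ anticomponents at three. For each component $\overline{G_k}$ I classify its structure: $\overline{G_k}\cong K_2$ when $n_k=2$ (by connectivity); Lemma \ref{lemma-ni>3}\ref{cograph-R1m} forces $\overline{G_k}\cong R_{1,m}$ when $n_k\geq3$ and $V(G_k)\cap S_0=\emptyset$; and Lemma \ref{lemma-ni>3}\ref{cografo-so=1} forces $\overline{G_k}\in\{R^*_{\ell,1},R_{\ell,m}:\ell\in\{1,2\}\}$ when $|V(G_k)\cap S_0|=1$. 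Conditions (i) and (ii) follow from additional arguments by contradiction, exploiting that the $\overline{K_2}$-part $\{\overline{a_k},\overline{b_k}\}$ of any $\ell=2$ component is a twin-pair in $\overline{G_k}$ and forms an edge in $G_k$: if two components $\overline{G_i}, \overline{G_j}$ were of $\ell=2$ type, then open-independence of $S_0$ would force $s_k\in\{a_k,b_k\}$ in each (otherwise the twins $\overline{a_k},\overline{b_k}$ remain undistinguished) and would force $\overline{a_i},\overline{b_i},\overline{a_j},\overline{b_j}\notin\overline{S}_1$, after which the unchosen twins $b_i, b_j$ share the common $S$-neighborhood $\{s_i, s_j\}$, contradicting the locating property. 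The second part of (i) is established analogously: a third size-$\geq 3$ component coexisting with an $\ell=2$ component must, by Lemma \ref{lemma-ni>3}\ref{cograph-R1m}, be $R_{1,m_l}$, and then the isolated vertex $u_l$ of $G_l$ has the same $S$-neighborhood $\{s_i,s_j\}$ as $b_i$. Condition (ii) follows from (i) together with Lemma \ref{lemma-ni>3}\ref{cograph-R1m}: three components in $\{R^*_{1,1},R^*_{2,1},R_{2,m}\}$ would require a size-$\geq 3$ component not meeting $S_0$ that is not isomorphic to $R_{1,m}$, which is impossible.

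For sufficiency, given the stated conditions I construct $S$ explicitly. Pick $S_0=\{s_1,s_2\}$ with $s_k\in V(G_k)$ in each of the two anticomponents designated to meet $S_0$, choosing the distinguished vertex of $\overline{G_k}$: $a_k$ when $\overline{G_k}\in\{R_{2,m}, R^*_{2,1}\}$, $u_k$ when $\overline{G_k}=R_{1,m}$, and a leaf $u_k$ of $P_3$ when $\overline{G_k}=R^*_{1,1}$. Fill $\overline{S}_1$ with: both vertices of each $K_2$ component; all matched vertices $\overline{c_{k,l}}, \overline{d_{k,l}}$ of every $R_{\ell,m}$ component (meeting $S_0$ or not); and the center $\overline{c_k}$ together with the non-$s_k$ leaf of each $R^*_{1,1}$ component meeting $S_0$; keep $\overline{a_k},\overline{b_k}$ out of $\overline{S}_1$ for the $\ell=2$ component. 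Direct verification confirms the $OLD_{oind}$ conditions: $s_1, s_2$ dominate each other via the join and nothing else (so open-independence of $S_0$ holds); every $v\in V(G)\setminus S_0$ is dominated by $s_1$ or $s_2$ via the join, together with its mate in $\overline{S}_1$ or with the second $S_0$-vertex; members of $\overline{S}_1$ have exactly one $S$-neighbor inside their component (the matched partner for an $R_{\ell,m}$ pair, and the paired vertex for a $P_3$); $\overline{D}$-vertices are dominated at least twice within their components; and distinguishability holds pairwise using the cross-matching edges and the distinct $\overline{G_k}$-memberships that separate $s_1$-side and $s_2$-side neighborhoods.

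The main obstacle is the detailed case analysis for conditions (i) and (ii) in the necessity direction, owing to the delicate interplay between the twin-pair structure of an $\ell=2$ component and the rigidity imposed by $|S_0|=2$ being split across two anticomponents. Each subcase requires identifying an explicit pair of vertices in $V(G)$ with identical $S$-neighborhoods (typically the ``leftover'' twin and an isolated vertex elsewhere) or a vertex that fails the required double-domination under the forced placement of $\overline{S}_1$ inside each $\overline{G_k}$.
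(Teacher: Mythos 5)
Your overall skeleton matches the paper's (same lemmas for the classification and the bound on the number of large components in the necessity direction; an explicit construction with $S_0$ consisting of two ``$D$-vertices'' and $\overline{S}_1$ essentially the matched vertices in the sufficiency direction), but the argument you give for condition \ref{cografo-oo-2(i)} has a genuine gap. Your contradiction hinges on the claim that the $\overline{K_2}$-part $\{\overline{a_k},\overline{b_k}\}$ of \emph{any} $\ell=2$ component is a twin pair, which is false for $R^*_{2,1}$: there $\overline{a_k}$ and $\overline{b_k}$ have different neighborhoods (one sees both matched vertices, the other only one), so the step ``$S_0$ must pick $s_k\in\{a_k,b_k\}$, else the twins are undistinguished'' does not apply, and your subsequent identification of two leftover vertices with $S$-neighborhood $\{s_i,s_j\}$ is not forced in that case. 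Moreover, even in the $R_{2,m}$ case, open-independence of $S_0$ only removes $\overline{s_i},\overline{s_j}$ from $\overline{S}_1$; excluding the \emph{other} barred twin requires an extra argument (e.g.\ via Lemma \ref{lemmas}\ref{cografo-Dbar-i}: if $\overline{b_i}\in\overline{S}_1$ then at most one matched vertex of that component lies in $\overline{S}_1$, making $\overline{D_i}$ too large or non-independent). The paper avoids all of this with a counting argument you never invoke: Lemma \ref{lemmas}\ref{cografo-size-1} gives $2\le|\overline{D}|\le 3$ when $|S_0|=2$, while every $R^*_{2,1}$ or $R_{2,m}$ component forces $|\overline{D_i}|=2$ (a single deleted vertex cannot leave a set inducing independent edges), so two such components, or one such component together with two further components of size at least three, immediately contradict $|\overline{D}|\le 3$. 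Your treatment of \ref{cografo-oo-2(ii)} coincides with the paper's.

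In the sufficiency direction your construction is essentially the paper's, but it is under-specified exactly where the asymmetry of $R^*_{2,1}$ matters: you must put into $S_0$ the vertex of the $\overline{K_2}$-part that is incident with the missing edge (the paper's ``vertex of degree two in $G_1$''), because otherwise its barred copy has only one neighbor in $\overline{M_1}$ and, with the other choice of $s_k$, is open-dominated only once. Your phrase ``choose $a_k$'' does not fix this, and your necessity sketch (which treats the two vertices as interchangeable twins) suggests the distinction was not noticed. You also gloss over the final distinguishing step: one must argue that conditions \ref{cografo-oo-2(i)} and \ref{cografo-oo-2(ii)} leave at most one vertex $z$ of $G$ with neither $z$ nor $\overline{z}$ in $S$, so that $z$ is the unique vertex whose $S$-neighborhood is exactly $S_0$; this is precisely where the hypotheses enter the sufficiency proof, and your verification paragraph does not make that dependence explicit.
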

\begin{proof} 
Suppose that $G$ is as described and that $G\overline{G}$ has an $OLD_{oind}$-set $S=S_0 \cup \overline{S}_1$ with $S_0\subseteq V(G)$, $\overline{S}_1\subseteq V(\overline{G})$, and $|S_0|=2$. Let $t'$ be the number of components of $\overline{G}$ of size at least three. By Lemma \ref{lemmas} \ref{cografo-three}, $t' \leq 3$. By Lemma \ref{lemmas}  \ref{cografo-size-1}, two anticomponents of $G$ have nonempty intersection with $S_0$ and both have size at least three, thus $2\leq t' \leq 3$.  

Since $|V(G_i) \cap S_0|\leq 1$, for every component $\overline{G_i}$ of $\overline{G}$ with $n_i\geq 3$, by Lemma \ref{lemma-ni>3} \ref{cografo-so=1}, the components of $\overline{G}$ of size at least three are $R^*_{\ell,1}$ or $R_{\ell,m}$, for $\ell=1,2$. To prove \ref{cografo-oo-2(i)}, firstly suppose that there are at least two components of $\overline{G}$ that are one of the graphs $R^*_{2,1}$ and $R_{2,m}$, say $\overline{G_i}$ and $\overline{G_j}$. By Lemma \ref{lemmas} \ref{cografo-Dbar-i}, $\overline{D_i}$ (resp. $\overline{D_j}$) is independent with size at least one and at most two. If $|\overline{D_i}|=1$ (resp. $|\overline{D_j}|=1$) the remaining vertices of $\overline{G_i}$ (resp. $\overline{G_j}$) cannot induce independent edges. Hence $|\overline{D_i}|=|\overline{D_j}|=2$ which implies that $|\overline{D}|\geq 4$. By Lemma \ref{lemmas}  \ref{cografo-size-1},  $|S_0| \neq 2$, contradicting the premise of the theorem. Hence we can conclude that at most one component of $\overline{G}$ is either $R^*_{2,1}$ or $R_{2,m}$. Secondly, suppose that $\overline{G}$ has exactly one component, say $\overline{G_i}$, which is isomorphic to either $R^*_{2,1}$ or $R_{2,m}$. Thus $|\overline{D_i}|=2$. Suppose that $\overline{G}$ has other two components, say $\overline{G_j}$ and $\overline{G_p}$, each one with size at least three. By Lemma \ref{lemmas} \ref{cografo-Dbar-i}, $|\overline{D_j} \cup \overline{D_p}|\geq2$. Since $|\overline{D_i}|=2$, $|\overline{D}|\geq 4$. Again, by Lemma \ref{lemmas} \ref{cografo-size-1}, $|S_0| \neq 2$, contradicting the premise of the theorem. This completes the proof of \ref{cografo-oo-2(i)}.
Now we prove \ref{cografo-oo-2(ii)}. If $t'=3$, since $|S_0|=2$ one of the components of $\overline{G}$ with size at least three, say $\overline{G_j}$, is such that $G_j$ has empty intersection with $S_0$. Thus by Lemma \ref{lemma-ni>3} \ref{cograph-R1m}, $\overline{G_j} \cong R_{1,m}$, which implies that at most two components of $\overline{G}$ are isomorphic to the graphs $R^*_{1,1}$, $R^*_{2,1}$, and $R_{2,m}$.
The difference $t-t'$ is the number of components of $\overline{G}$ of size two. Clearly, each of these components (if any) is isomorphic to $K_2$.

Conversely, suppose $G\overline{G}$ is as described, with at least two, and at most three, components $\overline{G_i}$ with $n_i\geq3$, where the components of $\overline{G}$ are the graphs $K_2$, $R^*_{\ell,1}$, and $R_{\ell,m}$, for $\ell=1,2$. Assume that conditions \ref{cografo-oo-2(i)} and \ref{cografo-oo-2(ii)} hold. We show how to select the two vertices in $S_0$ and the vertices in $\overline{S}_1$ such that $S = S_0 \cup \overline{S}_1$ is an $OLD_{oind}$-set of $G\overline{G}$. For every component $\overline{G_i}$ of $\overline{G}$ there is a partition of this vertex set into sets $\overline{D_i}$ and $\overline{M_i}$ such that $\overline{D_i}$ is an independent set with $|\overline{D_i}|\in \{0,1,2\}$ and $\overline{M_i}$ contains the $2m$  vertices of the $m\geq1$ independent edges of $\overline{G_i}$. Let $\overline{D_i}$ and $\overline{M_i}$ be such sets.
Let $\overline{D}=\bigcup_{i\in[t]}\overline{D_i}$.
Then, $\overline{S}_1=\bigcup_{i\in [t]}M_i=\overline{M}$. 
Let $D_i$, $M_i$, $D$, and $M$ be the sets of corresponding vertices in $G$.
Note that in $G$, every vertex in $M$ has a different neighbor in $\overline{M}$. Since we shall show that $S_0$ has vertices in two distinct anticomponents, every vertex of $G$ has at least one neighbor in $S_0$. Hence, the vertices in $M \cup \overline{M}$ are distinguished and open-dominated by $S$ and we need to consider only the vertices in $D\cup\overline{D}$.

We proceed by setting the vertices in $S_0$.
We first consider the case $t'=2$. Without loss of generality, assume $|G_1|,|G_2| \geq 3$. Since conditions \ref{cografo-oo-2(i)} and \ref{cografo-oo-2(ii)} hold, by symmetry, we can assume that $\overline{G_1} \in \{R^*_{1,1}, R^*_{2,1}, R_{1,m},R_{2,m}\}$ and $\overline{G_2} \in \{R^*_{1,1}, R_{1,m}\}$. 

If $|D_1|=1$, let $D_1=\{v\}$. If $|D_1|=2$ and $\overline{G_1}$ is isomorphic to $R^*_{2,1}$, let $D_1=\{u,v\}$, with $v$ being the vertex of degree two in $G_1$. If $|D_1|=2$ and $\overline{G_1}$ is isomorphic to $R_{2,m}$, let $D_1=\{u,v\}$. Let $\overline{u},\overline{v}$ be the corresponding vertices of $u,v$ in $\overline{G_1}$. We set $v\in S_0$, which gives rise to the following cases:
\begin{enumerate} [topsep=0pt,itemsep=-1ex,partopsep=1ex,parsep=1ex]
    \item \label{case-a} If $\overline{G_1}$ is isomorphic to $R^*_{1,1}$, then $\overline{v}$ has one neighbor in $\overline{M_1}$ and $v \in N_G(\overline{v})$.
    \item If $\overline{G_1}$ is isomorphic to $R^*_{2,1}$, then $\overline{u}$ has two neighbors in $\overline{M_1}$ and  $\overline{v}$ has exactly one neighbor in $\overline{M_1}$. Since $v \in S_0$, $\overline{u}$ and $\overline{v}$ are open-dominated twice and distinguished by $S$.
    \item If $\overline{G_1}$ is isomorphic to $R_{2,m}$, then $\overline{u}$ and $\overline{v}$ are both neighbors of every vertex in $\overline{M_1}$. Since $v \in S_0$, these two vertices are open-dominated and distinguished by $S$.
    \item \label{case-d} If $\overline{G_1}$ is isomorphic to $R_{1,m}$, then $\overline{v}$ is the unique vertex that is the neighbor of all vertices of $M_1$.
\end{enumerate}
As $\overline{G_2} \in \{R^*_{1,1}, R_{1,m}\}$, $|D_2|=1$. Let $D_2=\{v\}$. We set $v \in S_0$. We have two possibilities for $\overline{G_2}$ that are analogous to the cases \ref{case-a} and \ref{case-d}, described above. 
We can conclude that all vertices in $\overline{G}$ are distinguished and open-dominated by $S$, and $S$ is open-independent.

Now we consider the vertices in $D$. Since at most one component of $\overline{G}$ is isomorphic to either $R^*_{2,1}$ or $R_{2,m}$, it follows that $2\leq|D|\leq 3$. Two of the vertices in $D$ are also in $S_0$. So, there is at most one vertex of $G$, $z$ say, such that neither $z$ nor $\overline{z}$ is in $S$. The choosing of $S_0$ guarantees that $z$ is the only vertex of $G$ that is dominated only by $S_0$.

Finally, we consider the case $t'=3$. Without loss of generality, we assume that $|G_j| \geq 3$, for every $j\in[3]$. Since conditions \ref{cografo-oo-2(i)} and \ref{cografo-oo-2(ii)} hold, and by symmetry, we assume that $\overline{G_1},\overline{G_2} \in \{R^*_{1,1}, R_{1,m}\}$ and $\overline{G_3}$ is isomorphic to $R_{1,m}$. Thus, $|D_1|=|D_2|=1$ and we set $S_0=D_1 \cup D_2$. 

As $\overline{G_1},\overline{G_2} \in \{R^*_{1,1}, R_{1,m}\}$, we can use similar arguments to those used in the case $t'=2$ in order to conclude that each vertex in $\overline{D_1}\cup\overline{D_2}$ is both dominated at least twice and distinguished by $S$.
Since $S_0=D_1\cup D_2$, these vertices are distinguished by $S$. Again, the vertex in $D_3$ is the only vertex of $G$ that is dominated only by $S_0$.
Therefore, $S$ is an $OLD_{oind}$-set of $G\overline{G}$ with $|S_0|=2$. 
See Fig. \ref{fig:cp-cograph2} for an example.
\end{proof}

\begin{figure}[!ht]
\centering
 \begin{subfigure}[b]{.9\textwidth}
 \centering
  \begin{tikzpicture}[
 graph/.style={matrix of math nodes, ampersand replacement=\&, column sep=12pt, row sep=15pt,  nodes={circle,thin,inner sep=0pt,minimum size=4pt}},
 oldSet/.style args = {(#1,#2)}{row #1 column #2/.style={nodes={draw,fill=black}}},
 nodeSet/.style args = {(#1,#2)}{row #1 column #2/.style={nodes={draw,fill=black!20!white}}},
 myBlock/.style = {black,thin,draw=#1, inner sep=4pt, rounded corners},
 every label/.append style={font=\scriptsize,label distance=-1pt},
 oldSet/.list={(1,1),(1,2),(1,5),(1,6),(1,8),(1,9),(1,10),(1,11),(2,4),(2,7)},
 nodeSet/.list={(1,3),(1,4),(1,7),(2,1),(2,2),(2,3),(2,5),(2,6),(2,8),(2,9),(2,10),(2,11)}
]
\path[use as bounding box] (-3.6,-1.1) rectangle (4.4, 0.9);
\matrix [graph] (M)
{
 {}\& {}\& {}\& {}\& {}\& {}\& [10pt] {}\& {}\& {}\& [10pt]{}\& {}\\
 {}\& {}\& {}\& {}\& {}\& {}\&  {}\& {}\& {}\&{}\& {}\\[-7pt]
 {}\& {}\& {}\& {}\& {}\& {}\&  {}\& {}\& {}\&{}\& {}\\
};
\foreach \i  in {1,...,11}{
 \draw [thin] (M-1-\i) -- (M-2-\i) {};
}
\foreach \i [evaluate=\i as \j using int(\i+1)] in {1,2,4,5,7,8,10}
 \draw [thin] (M-1-\i) -- (M-1-\j);
\draw [thin] (M-2-3) -- (M-2-4);
\node also [label={[xshift=7pt]right:$V(\overline{G})$}] (M-1-11) {};
\node[myBlock,fit=(M-2-1) (M-3-6),label={below:$G_1$}] (b21) {};
\node[myBlock,fit=(M-2-7) (M-2-9),label={below:$G_2$}] (b22) {};
\node[myBlock,fit=(M-2-10) (M-2-11),label={below:$G_3$},label={[xshift=2pt]right:$V(G)$}] (b23) {};
\draw [thin] 
(M-1-1) to[out=40,in=140] (M-1-3)
(M-1-1) to[out=40,in=140] (M-1-4)
(M-1-2) to[out=40,in=140] (M-1-4)
(M-1-3) to[out=30,in=140] (M-1-5)
(M-1-3) to[out=40,in=140] (M-1-6)
(M-1-4) to[out=40,in=150] (M-1-6)
(M-2-1) to[out=330,in=210] (M-2-5)
(M-2-1) to[out=320,in=220] (M-2-6)
(M-2-2) to[out=340,in=200] (M-2-5)
(M-2-2) to[out=330,in=210] (M-2-6)
(M-2-7) to[out=335,in=205] (M-2-9);
\draw [thick]
(b21) to[out=330,in=230,looseness=.5] (b23)
(b21) to[out=6.1,in=183,looseness=0] (b22)
(b22) to (b23);
\end{tikzpicture}
    \caption{The components of $\overline{G}$ are the graphs $R_{2,2}$, $R^*_{1,1}$, and $K_2$.}
    \label{fig:cp-cograph2}
 \end{subfigure}
 
 \begin{subfigure}[b]{.9\textwidth}
 \centering
    \begin{tikzpicture}[
 graph/.style={matrix of math nodes, ampersand replacement=\&, column sep=12pt, row sep=25pt,  nodes={circle,thin,inner sep=0pt,minimum size=4pt}},
 oldSet/.style args = {(#1,#2)}{row #1 column #2/.style={nodes={draw,fill=black}}},
 nodeSet/.style args = {(#1,#2)}{row #1 column #2/.style={nodes={draw,fill=black!20!white}}},
 myBlock/.style = {black,thin,draw=#1, inner sep=4pt, rounded corners},
 every label/.append style={font=\scriptsize,label distance=-1pt},
 oldSet/.list={(1,1),(2,1),(1,2),(1,3),(1,4),(1,5),(1,8),(1,9),(1,14),(1,15),(1,16),(1,17),(2,6),(2,7),(2,10),(2,11)},
 nodeSet/.list={(1,6),(1,7),(1,10),(1,11),(1,12),(1,13),(2,2),(2,3),(2,4),(2,5),(2,8),(2,9),(2,12),(2,13),(2,14),(2,15),(2,16),(2,17)}
]
\path[use as bounding box] (-5.90,-1.4) rectangle (6.6, 1.2);
\matrix [graph] (M)
{
 {}\& {}\& {}\& {}\& {}\& [8pt] {}\& {}\& [8pt]{}\& {}\& [8pt]{}\& {}\& {}\&[12pt] {}\& {}\& {}\& [12pt]{}\& {}\\
 {}\& {}\& {}\& {}\& {}\& {}\&[8pt] {}\& {}\& {}\&[12pt] {}\& {}\& {}\&[8pt] {}\& {}\& {}\& {}\& {}\\[-20pt]
 {}\& {}\& {}\& {}\& {}\& {}\&[8pt] {}\& {}\& {}\&[12pt] {}\& {}\& {}\&[8pt] {}\& {}\& {}\& {}\& {}\\
};
\foreach \i in {1,...,17}
 \draw [thin] (M-1-\i) -- (M-2-\i) {};
\foreach \i [evaluate=\i as \j using int(\i+1)] in {2,4,8,13,14,16}
 \draw [thin] (M-1-\i) -- (M-1-\j);
\foreach \i [evaluate=\i as \j using int(\i+1)] in {1,3,6,10,11}
 \draw [thin] (M-2-\i) -- (M-2-\j);
\node also [label={[xshift=7pt]right:$V(\overline{G})$}] (M-1-17) {};
\node[myBlock,fit=(M-1-1)  (M-1-5)] (b11) {};
\node[myBlock,fit=(M-1-6)  (M-1-7)] (b12) {};
\node[myBlock=black,inner sep=3pt,fit=(b11) (M-1-9)] (b13) {};
\node[myBlock,fit=(M-1-10) (M-1-12)] (b14) {};
\node[myBlock,fit=(M-2-1) (M-3-7)] (b21) {};
\node[myBlock,fit=(M-2-8) (M-2-9)] (b22) {};
\node[myBlock,inner sep=3pt,fit=(b21) (M-2-12),label={below:$G_1$}] (b23) {};
\node[myBlock=black,fit=(M-2-13) (M-2-15),label={below:$G_2$}] (b24) {};
\node[myBlock=black,fit=(M-2-16) (M-2-17),label={below:$G_3$},label={[xshift=2pt]right:$V(G)$}] (b25) {};
\draw [thin] 
 (M-1-13) to[out=040,in=140] (M-1-15)
 (M-2-1)  to[out=340,in=220] (M-2-3)
 (M-2-1)  to[out=330,in=220] (M-2-4)
 (M-2-1)  to[out=320,in=220] (M-2-5)
 (M-2-2)  to[out=330,in=210] (M-2-4)
 (M-2-2)  to[out=320,in=210] (M-2-5)
 (M-2-3)  to[out=320,in=200] (M-2-5)
 (M-2-10) to[out=320,in=220] (M-2-12);
\draw [thick]
(b11) -- (b12)
(b21) to[out=4.4,in=181,looseness=0] (b22)
(b13) -- (b14)
(b23) to[out=2.9,in=182,looseness=0] (b24)
(b24) -- (b25)
(b23) to[out=340,in=210,looseness=.7] (b25);
\end{tikzpicture}
    \caption{$G$ can be obtained recursively from ${K_1\bowtie \overline{2K_2}}$. $G=\overline{\overline{\overline{\overline{K_1\bowtie \overline{2K_2}}\bowtie\overline{K_2}}\bowtie\overline{K_2}}\bowtie\overline{R_{1,1}}}\bowtie\overline{R_{1,1}}\bowtie\overline{K_2}=((((K_1\bowtie \overline{2K_2})\oplus  K_2)\bowtie\overline{K_2})\oplus{R_{1,1}})\bowtie\overline{R_{1,1}}\bowtie\overline{K_2}$.}
    \label{fig:cp-cograph1}
 \end{subfigure}
\caption{The complementary prisms of cographs having $OLD_{oind}$-sets. \ref{fig:cp-cograph2} illustrates Theorem \ref{cografo-oo-2} and \ref{fig:cp-cograph1} illustrates Theorem \ref{cograph}. An edge between two rectangles indicates adjacency between all pairs of vertices, one in each rectangle.}
\label{fig:complprisms}
\end{figure}

\begin{mytheorem} \label{cograph}
If $G$ is a connected cograph then $G\overline{G}$ has an $OLD_{oind}$-set iff either (i) $G$ is isomorphic to one of the graphs ${K_1\bowtie \overline{mK_2}}$, $m\geq 1$, and graph ${G}$ is as described in Theorem \ref{cografo-oo-2}, or (ii) $G$ can be obtained from them recursively by the following operation, where $H$ is a connected cograph such that $H\overline{H}$ has an $OLD_{oind}$-set. Set $G=\overline{H}\bowtie F \bowtie (\overline{rK_2})$, \noindent where $F \in \{\overline{R_{1,m}}, \overline{K_2}\}$, $r\geq0$, $m\geq1$.
\end{mytheorem}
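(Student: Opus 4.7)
The plan is to prove Theorem~\ref{cograph} by induction on $|V(G)|$, classifying an $OLD_{oind}$-set $S = S_0 \cup \overline{S}_1$ of $G\overline{G}$ (with $S_0 \subseteq V(G)$, $\overline{S}_1 \subseteq V(\overline{G})$) according to $|S_0|$, and leaning on the structural Lemmas~\ref{lemmas} and~\ref{lemma-ni>3} together with the already established Theorems~\ref{OO-1-univ-vert} and~\ref{cografo-oo-2}.

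For the sufficiency direction, the base graphs $K_1\bowtie\overline{sK_2}$ and those described in Theorem~\ref{cografo-oo-2} immediately have $G\overline{G}$ admitting an $OLD_{oind}$-set. For the recursive step, I would assume inductively that $H\overline{H}$ has an $OLD_{oind}$-set $S_H$ and set $G := \overline{H}\bowtie F\bowtie \overline{rK_2}$. Noting that the induced subgraph of $G\overline{G}$ on $V(\overline{H})\cup V(H)$ is a copy of $H\overline{H}$ (with $V(\overline{H})\subseteq V(G)$ playing the role of the complement-side), I would transport $S_H$ into this subgraph, and then enlarge the resulting $\overline{S}_1$ by the $2m$ or $2$ vertices forming the $K_2$-edges in each of the $r+1$ components of $\overline{G}$ other than $H$ (the $mK_2$-part of $R_{1,m}$ if $F=\overline{R_{1,m}}$, both vertices of $K_2$ if $F=\overline{K_2}$, plus both vertices of each of the $r$ $K_2$-components arising from $\overline{rK_2}$). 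The verification via Theorem~\ref{OO-general} then reduces to three local observations: every vertex of $V(F)\cup V(\overline{rK_2})$ is joined to all of $V(\overline{H})$, hence sees every $S_0$-vertex that came from $S_H$, so it is open-dominated many times and cannot lie in $S$ by open-independence; each newly added $\overline{S}_1$-vertex has exactly one $\overline{G}$-neighbour in $\overline{S}_1$ (its $K_2$-partner) while its $V(G)$-partner is outside $S_0$, so these $K_2$-partnerships simultaneously open-dominate and distinguish the added vertices; and inside the $H$-subprism the set behaves identically to $S_H$ because the added anticomponents of $G$ contribute only common neighbours to $V(\overline{H})$ and no neighbours at all to $V(H)$.

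For the necessity direction, let $G\overline{G}$ admit an $OLD_{oind}$-set $S=S_0\cup\overline{S}_1$. If $|S_0|=1$, Theorem~\ref{theo: card-S_0 = 1} identifies $G \cong K_1\bowtie\overline{sK_2}$. Otherwise every anticomponent of $G$ has at least two vertices and $|S_0|\ge 2$. If $|S_0|=2$, Lemma~\ref{lemmas}\ref{cografo-size-1} places the two vertices in distinct anticomponents of order $\ge 3$, so Theorem~\ref{cografo-oo-2} recognises $G$ as a base graph. If $|S_0|\ge 3$, the structural fact that any two $S_0$-vertices in distinct anticomponents are adjacent in $G$, combined with open-independence, forces $S_0$ to lie inside a single anticomponent $G_i$; I then write $G_i = \overline{H}$ for $H$ the corresponding component of $\overline{G}$, apply Lemma~\ref{lemmas}\ref{cografo-empty} to bound at most one further anticomponent of $G$ having order $\ge 3$, and apply Lemma~\ref{lemma-ni>3}\ref{cograph-R1m} to force this anticomponent (if present) to be $\overline{R_{1,m}}$. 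The remaining anticomponents have order two and are therefore $\overline{K_2}$'s. Grouping them yields the decomposition $G = \overline{H}\bowtie F\bowtie\overline{rK_2}$ with $F\in\{\overline{R_{1,m}}, \overline{K_2}\}$, matching the recursive construction.

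The main obstacle I expect is the final step: verifying that the restriction of $S$ to the induced subgraph on $V(\overline{H})\cup V(H)$, which forms an induced copy of $H\overline{H}$, is itself an $OLD_{oind}$-set, so that the inductive hypothesis closes the argument for $H$. The key facts for this verification are that $S_0 \subseteq V(\overline{H})$ (so no $S_0$-vertex is discarded); that each $v\in V(\overline{H})$ loses only $\overline{S}_1$-neighbours that come from the $F$- and $\overline{rK_2}$-components of $\overline{G}$ and are therefore common to every vertex of $V(\overline{H})$ (hence irrelevant to distinguishing inside $V(\overline{H})$ and harmless for open-domination and open-independence); and that each $\overline{v}\in V(H)$ has no $\overline{G}$-neighbour outside $V(H)$ at all. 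Together with Theorem~\ref{OO-general}, these observations certify the restriction as an $OLD_{oind}$-set of $H\overline{H}$ and complete the induction.
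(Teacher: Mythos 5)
Your overall route is the same as the paper's: the same case analysis on $|S_0|$ (Theorem~\ref{theo: card-S_0 = 1} for $|S_0|=1$, Theorem~\ref{cografo-oo-2} for $|S_0|=2$, open-independence forcing $S_0$ into a single anticomponent when $|S_0|\geq 3$), the same use of Lemma~\ref{lemmas}~\ref{cografo-empty} and Lemma~\ref{lemma-ni>3}~\ref{cograph-R1m} to obtain $G=\overline{H}\bowtie F\bowtie(\overline{rK_2})$, the same restriction-to-the-subprism argument for necessity, and the same construction (transport the set of $H\overline{H}$ and add the vertices of the independent edges of the new components of $\overline{G}$) for sufficiency.

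There is, however, a genuine gap in your sufficiency verification: your three ``local observations'' do not establish that all pairs of vertices are distinguished. When $F=\overline{R_{1,m}}$, the $V(G)$-partner $v$ of the universal vertex of $\overline{F}=R_{1,m}$ is a vertex whose code is exactly $S_0=\overline{B}_1$, and one must rule out a vertex $w\in V(\overline{H})$ with the same code, i.e.\ a vertex of $\overline{H}$ adjacent to all of $\overline{B}_1$ whose prism-partner lies outside $B_0$. The paper closes exactly this point by applying Lemma~\ref{lemmas}~\ref{s1-nonempty} to $H$: since $\overline{H}$ is disconnected, $\overline{B}_1$ meets every component of $\overline{H}$, so no vertex of $\overline{H}$ sees all of $S_0$ (this also gives $|S_0|\geq 2$, needed so that the new $G$-side vertices are open-dominated twice, and it silently excludes the degenerate choice of a trivial $H$). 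Your observations (added $\overline{S}_1$-vertices separated by their $K_2$-partners; the subprism behaving as $S_H$) cover neither this cross pair nor the universal vertex of $R_{1,m}$ itself. A second, smaller slip: a vertex of $V(\overline{H})\subseteq V(G)$ has no neighbours at all in the $\overline{F}$- or $K_2$-components of $\overline{G}$ (the only cross edges of a complementary prism are the matching edges), so under restriction it does not ``lose common $\overline{S}_1$-neighbours''; and had such common neighbours existed, discarding them would not be harmless for open-domination. The correct and stronger fact, which the paper uses, is that every vertex of the subprism has its entire $S$-neighbourhood inside $V(\overline{H})\cup V(H)$, so the restricted set inherits domination, location and open-independence verbatim.
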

\begin{proof}
Suppose that $G$ is a connected cograph such that $G\overline{G}$ has an $OLD_{oind}$-set $S=S_0 \cup \overline{S}_1$ where $S_0\subseteq V(G)$ and $\overline{S}_1\subseteq V(\overline{G})$. Assume that $\overline{G}$ has components $\overline{G_1}\ldots,\overline{G}_t$, with $t\geq2$. By Lemma \ref{lemmas} \ref{theo:both-nonempty}, $S_0$ is nonempty. If $|S_0|=1$, then by Theorem \ref{theo: card-S_0 = 1}, ${G}$ is the graph ${K_1\bowtie \overline{mK_2}}$, and if $|S_0|=2$, then $G$ is the graph described in Theorem \ref{cografo-oo-2}. So, we may assume that $|S_0|\geq 3$.
By Lemma \ref{lemmas} \ref{cografo-three}, at least one, and at most three, components of $\overline{G}$ have at least three vertices. Since $|S_0|\geq 3$, there exists exactly one anticomponent $G_i$ say, of $G$ with nonempty intersection with $S_0$. (If this were not so, since every vertex of $G_i$ is adjacent to every vertex of any other anticomponent $G_p$ with $p\neq i$, then $S$ would not be an open-independent set.) By Lemma \ref{lemmas} \ref{cografo-empty}, there exists at most one anticomponent $G_i$ say, of $G$ with size at least three with empty intersection with $S_0$. This implies that $G$ has at most two anticomponents $G_i$, with size at least three. If $G$ has an anticomponent $G_i$ of size at least three having empty intersection with $S_0$, by Lemma  \ref{lemma-ni>3} \ref{cograph-R1m}, this anticomponent is isomorphic to $\overline{R_{1,m}}$. Otherwise, if $|G_i| = 2$, then $G_i$ is isomorphic to $\overline{K_2}$.  
Without loss of generality, we assume that $G_1$ is an anticomponent of $G$ with $S_0 \subseteq V(G_1)$ and that $G_2$ is an anticomponent that is isomorphic to either $\overline{R_{1,m}}$ or to $\overline{K_2}$. Note that each of the $t-2$ anticomponents of $G$ (if any) is isomorphic to $\overline{K_2}$. So, until now, we have that $G=G_1\bowtie G_2 \bowtie (\overline{rK_2})$, with $r\geq0$, for some disconnected graph $G_1$ (since $\overline{G_1}$ is connected).
Now, suppose that $G_1\overline{G_1}$ does not have an $OLD_{oind}$-set.
Thus, $S'=S \cap V(G_1\overline{G_1})$ is not an $OLD_{oind}$-set of $G_1\overline{G_1}$, and there is a vertex $v$ say, of $G_1\overline{G_1}$ that is neither distinguished nor open-dominated by $S'$. Since $S_0 \subseteq V(G_1)$, and there is no edge between $\overline{G_1}$ and any other component of $\overline{G}$, we have that $N_{G\overline{G}}(v)\cap S\subseteq V(G_1\cup\overline{G_1})$. This implies that $v$ is neither distinguished nor dominated by $S$ and hence, $S$ is not an $OLD_{oind}$-set of $G\overline{G}$. Therefore, $G$ can be obtained by the described operation from a disconnected graph $G_1$ such that $G_1\overline{G_1}$ has an $OLD_{oind}$-set.

Conversely, suppose that $G$ is a connected cograph obtained as described in the statement of the theorem. Since, by Theorem \ref{theo: card-S_0 = 1}, ${K_1\bowtie \overline{mK_2}}$ has an $OLD_{oind}$-set, and is also the graph described in Theorem \ref{cografo-oo-2}, we assume that $G$ is none of these graphs. Hence we assume that $G=\overline{H}\bowtie F \bowtie (\overline{rK_2})$ and $H$ is a connected cograph such that $H\overline{H}$ has an $OLD_{oind}$-set $B=B_0 \cup \overline{B}_1$ with $B_0\subseteq V(H)$ and $\overline{B}_1\subseteq V(\overline{H})$. We show how to select the vertices of a set $S=S_0 \cup \overline{S}_1$ where $S_0\subseteq V(G)$ and $\overline{S}_1\subseteq V(\overline{G})$ such that $S$ is an $OLD_{oind}$-set of $G\overline{G}$. We begin setting $S_0=\overline{B}_1$ and $\overline{S}_1=B_0$. Hence, the vertices of $H\cup \overline{H}$ are all distinguished and open-dominated by $S$ and we need to consider only the remainder of the vertices. 
We analyze the two relevant cases.  In the first case, assume that $F$ is isomorphic to $\overline{R_{1,m}}$. Here, we must add to $\overline{S}_1$ the vertices that induce the $m$ independent edges in $\overline{F}=R_{1,m}$. Let $\overline{v}$ be the universal vertex in $\overline{F}$. Then $\overline{v}$ is dominated at least twice by the vertices incident with the $m\geq1$ edges of $\overline{F}$ and is the unique vertex with this neighborhood in $S$. The vertices in $F$ are adjacent to all the vertices in $S_0$. Since $\overline{H}$ is disconnected and hence, $H\overline{H}$ has an $OLD_{oind}$-set, by Lemma \ref{lemmas} \ref{s1-nonempty}, $\overline{B}_1$ has at least one vertex in each component of $\overline{H}$. Let $v$ be the neighbor of $\overline{v}$ in $F$. Each vertex in $V(F)\setminus\{v\}$ has a distinct neighbor in $\overline{S}_1$ and $v$ is the unique vertex whose neighborhood in $S$ is exactly $S_0$. Observe that $\overline{B}_1$ has size at least three and recall that $S_0$  was set to it. In the second case, assume that $F$ is isomorphic to $\overline{K_{2}}$. In this case we add the vertices of $\overline{F}=K_2$ to the set $\overline{S}_1$ and, by using similar arguments to those in the first case, it is easy to see that the vertices of $F$ are open-dominated and distinguished by $S$. If $t\geq 3$, each of the $t-2$ other components of $\overline{G}$ (if any) is isomorphic to $K_2$ and we proceed in the same way as in the former case. Therefore, $S$ is an $OLD_{oind}$-set of $G\overline{G}$. See Fig. \ref{fig:cp-cograph1} for an example of a graph $G\overline{G}$ obtained by this operation from ${K_1\bowtie \overline{2K_2}}$.
\end{proof}

\section{Summary}
\label{sec:summary}
The problem of deciding whether or not a graph $G$ has an $OLD_{oind}$-set has important applications and was shown above to be $\mathcal{NP}$-complete even for the special cases when $G$ is either a planar bipartite graph of maximum degree five and girth six, or a planar subcubic graph of girth nine. Characterizations of both the $P_4$-tidy graphs and the cographs that have $OLD_{oind}$-sets have been presented. Also, necessary and sufficient conditions for a complementary prism of a connected cograph to have an  $OLD_{oind}$-set are derived. For future work, it might be fruitful to study the complexity of identifying $OLD_{oind}$-sets in other families of graphs, such as those that are Hamiltonian, Eulerian or $n$-partite when $n \geq 3$.

\acknowledgements
The authors are grateful to the anonymous reviewers for their very helpful suggestions.

\nocite{*}
\bibliographystyle{abbrvnat}
\bibliography{old-oind}

\end{document}